\documentclass{techrep}

%
%
 
\usepackage{amsfonts}
\usepackage{amsthm}

\newcommand{\comment}[1]{}

\newtheorem{theorem}{Theorem}{\bfseries}{\itshape}

\newtheorem{lemma}{Lemma}{\bfseries}{\itshape}

{\bfseries}{\itshape}

{\bfseries}{\itshape}

{\bfseries}{\itshape}

{\bfseries}{\itshape}

\newtheorem{assumption}{Assumption}{\bfseries}{\rm}

\theoremstyle{definition}

\newtheorem{definition}{Definition}

{\bfseries}{\rm}

\theoremstyle{plain}

\bibliographystyle{abbrv}

\usepackage{verbatim}
\usepackage{algorithm2e}
\usepackage[utf8]{inputenc}
\usepackage{microtype}
\usepackage{amsthm,amsmath,graphicx}
\usepackage{caption}
\usepackage{subcaption}
\usepackage{hyperref}
\usepackage[table,dvipsnames]{xcolor}
\definecolor{linkblue}{named}{Blue}
\hypersetup{colorlinks=true, linkcolor=linkblue,  anchorcolor=linkblue,
	citecolor=linkblue, filecolor=linkblue, menucolor=linkblue,
	urlcolor=linkblue} 
\setlength{\parskip}{1ex}
\usepackage{wasysym}
\usepackage{graphicx}
\usepackage{caption}
\usepackage{enumitem}
\usepackage{thmtools, thm-restate}
\usepackage{wrapfig}
\usepackage{float}
\usepackage{stackrel}
\usepackage{fixfoot}
\usepackage[utf8]{inputenc}
\usepackage[english]{babel}
\usepackage{mathrsfs}  
\usepackage{venturis}
\usepackage{placeins}

\listfiles


\newlength\problemsep
\setlength\problemsep{10pt}

\newcommand{\jit}[1]{}

%
%

\title{The Exact Spanning Ratio of the Parallelogram Delaunay Graph. \thanks{This research is supported in part by NSERC.}}

\author{Prosenjit Bose, Jean-Lou De Carufel, Sandrine Njoo}

\begin{document}

\date{\today}

\maketitle
\begin{abstract}
Finding the exact spanning ratio of a Delaunay graph has been one of the longstanding open problems in Computational Geometry. Currently there are only four convex shapes for which the exact spanning ratio of their Delaunay graph is known: the equilateral triangle, the square, the regular hexagon and the rectangle. In this paper, we show the exact spanning ratio of the parallelogram Delaunay graph, making the parallelogram the fifth convex shape for which an exact bound is known. The worst-case spanning ratio is \emph{exactly} 
$$\frac{\sqrt{2}\sqrt{1+A^2+2A\cos(\theta_0)+(A+\cos(\theta_0))\sqrt{1+A^2+2A\cos(\theta_0)}}}{\sin(\theta_0)},$$
where $A$ is the aspect ratio and $\theta_0$ is the non-obtuse angle of the parallelogram. Moreover, we show how to construct a parallelogram Delaunay graph whose spanning ratio matches the above mentioned spanning ratio. 
\end{abstract}

\section{Introduction}
In computer science, many phenomena can be represented by graphs, such as computer networks. Often, these graphs have many edges, and computations can be simplified if fewer edges are retained, for example, in compact routing, broadcasting etc. Throwing away edges leaves a subgraph, and if it has certain distance preserving properties, then it is called a spanner. As described by Narasimhan and Smid~\cite{DBLP:books/daglib/0017763}, the goal is to have a \emph{good spanner}, but what constitutes a good spanner depends on the application. In some cases, one may want a minimum weight spanning tree, whereas in other cases, it might be allowable to form cycles in order to keep all the points more connected, as long as the number of edges remains sufficiently small. Even more, some spanners are designed to be fault-tolerant, being able to lose edges without drastically affecting their ability to preserve distance. 

A \emph{geometric graph} $G$ is an undirected and weighted graph whose vertices and edges are points and line segments in the plane, where the weight of an edge is the Euclidean distance (denoted by $d_2(\cdot,\cdot)$) between its two endpoints.
When a geometric graph has certain distance preserving properties, it is called a \emph{spanner}.
More specifically,
given a point set $\mathcal{P}$,
a geometric graph with vertex set $\mathcal{P}$ is a \emph{$c$-spanner} if, for any $a,b\in\mathcal{P}$, there is a path in $G$ between $a$ and $b$, whose length is less than
or equal to $c\cdot d_2(a,b)$~\cite{DBLP:books/daglib/0017763}. 
The smallest constant $c$ for which this is true is called the \emph{spanning ratio} of $G$.
A simple example of a $1$-spanner is the complete graph. However, in the case of a complete graph, the number of edges is quadratic in the number of vertices. As such, using a subgraph with a linear number of edges is thus desirable.

We can also study the spanning ratio of a family $\mathcal{F}$ of geometric graphs.
In this case, we say that $c$ is an upper bound on the spanning ratio of $\mathcal{F}$ if all graphs in $\mathcal{F}$ are $c$-spanners.
Moreover, we say that $c'$ is a lower bound
on the spanning ratio of $\mathcal{F}$ if there is at least one graph in $\mathcal{F}$ whose spanning ratio is $c'$.
If we find matching upper and lower bounds on the spanning ratio of $\mathcal{F}$, we say that we have an exact (or tight) spanning ratio for this family.

For example, Delaunay triangulations, $\Theta_k$-graphs and Yao graphs are spanners with a linear number of edges~\cite{DBLP:books/daglib/0017763} (refer to Section~\ref{sect.known.results}). 
This paper will focus specifically on the spanning ratio of Delaunay graphs. Delaunay graphs are a fundamental structure that have been intensely
investigated in Computational Geometry~\cite{DBLP:books/wi/OkabeBSCK00}. There is an edge between two points
$a$ and $b$ in the Delaunay graph provided there exists a circle with $a$ and
$b$ on its boundary and no other points in its interior.
Apart from being spanners, Delaunay graphs possess many interesting geometric
properties. For example, among all triangulations of a set of points, the Delaunay graph is the one that maximizes the minimum angle~\cite{DBLP:books/wi/OkabeBSCK00}.

Determining the exact spanning ratio of the Delaunay graph is a notoriously
difficult problem.  Dobkin et al.~\cite{DBLP:journals/dcg/DobkinFS90} were the
first to show that the Delaunay graph has a spanning ratio of at most
$\pi(1+\sqrt{5})/2\approx 5.08$.  This was later improved by Keil and
Gutwin~\cite{DBLP:journals/dcg/KeilG92} who showed an upper bound of
$4\pi/3\sqrt{3}\approx 2.42$.  Currently, the best known upper bound of $1.998$
was shown by Xia~\cite{DBLP:journals/siamcomp/Xia13}. 
Chew~\cite{DBLP:journals/jcss/Chew89} gave a lower bound of $\pi/2$ which was
long believed to be optimal until Bose et al.~\cite{DBLP:journals/comgeo/BoseDLSV11} proved a lower bound of $1.5846$. This was later improved to $1.5932$ by  Xia and
Zhang~\cite{DBLP:conf/cccg/XiaZ11}. A tight bound on the
spanning ratio of the Delaunay graph remains elusive.

Several variations of the Delaunay triangulation come from generalizing
the circle to other convex shapes, Bose et al.~\cite{DBLP:journals/jocg/BoseCCS10} showed that the Delaunay graph defined by the homothet of any convex shape $C$ has a spanning ratio that is bounded by a constant times the ratio of the perimeter of $C$ to its width. Intuitively, this suggests that when the empty region is a long and skinny shape, the spanning ratio is large. Moreover, Bose et al.~\cite{DBLP:journals/cagd/BoseCS21} then showed that a Delaunay graph defined by any affine transformation $C'$ of $C$ is a constant spanner where the spanning ratio depends on the eigenvalues of the affine transformation. However, even if the spanning ratio for shape $C$ is tight, the upper bound obtained on the spanning ratio for $C'$ is not necessarily tight. In the search for tight bounds on the spanning ratio of Delaunay graphs, until recently, exact bounds were known for only 3 shapes, namely equilateral triangles~\cite{DBLP:journals/jcss/Chew89}, squares~\cite{DBLP:journals/comgeo/BonichonGHP15} and regular hexagons~\cite{DBLP:journals/jocg/Perkovic0T21}. The spanning ratio of empty rectangle Delaunay graphs was first studied by Bose et al.~\cite{DBLP:journals/comgeo/BoseCR18} who showed a spanning ratio of $\sqrt{2}(2A+1)$, where $A$ is the aspect ratio of the rectangle. This corroborates the intuition that long skinny rectangles have large spanning ratio. Recently,
van Renssen et al.~\cite{DBLP:conf/esa/RenssenSSW23} found a way to generalize Bonichon et al.'s result to give a tight bound of $\sqrt{2}\sqrt{A^2+1+A\sqrt{1+A^2}}$ for empty rectangle Delaunay graphs. This is the fourth shape for which we now have an exact bound on the spanning ratio. Their result is a delicate case analysis where they study different cases depending on the aspect ratio of the empty rectangle and the type of edge that results from these empty rectangles. Our main result is that we push the envelope further by proving a tight bound on the spanning ratio of Delaunay graphs defined by empty parallelograms. We generalize ideas from Bonichon et al.~\cite{DBLP:journals/comgeo/BonichonGHP15}, van Renssen et al.~\cite{DBLP:conf/esa/RenssenSSW23} and Bose et al.~\cite{DBLP:journals/cagd/BoseCS21}. The key idea was to find a way to change basis vectors without applying an affine transformation as in \cite{DBLP:journals/cagd/BoseCS21}. The generalization can be seen as follows: in Bonichon et al.'s approach, there is no degree of freedom in the shape defining the Delaunay graph since the aspect ratio of the square is fixed. In van Renssen et al.'s case, the shape has one degree of freedom, namely the aspect ratio. In our setting, the difficulty that arises is that our shapes have two degrees of freedom: the aspect ratio and the angle between adjacent sides of the parallelogram. We obtain an exact worst-case bound, and in fact, if we set the angle to $\pi/2$, we obtain van Renssen et al.'s result and in addition if we set the aspect ratio to $1$, we obtain Bonichon et al.'s result.

\subsection{Preliminaries, Notations and Definitions}
\label{sect.prelim}

The Delaunay triangulation of a point set $\mathcal{P}$ is a geometric graph $DT(\mathcal{P})=(\mathcal{P},E)$ defined as follows.
There is an edge between two vertices $u,v\in\mathcal{P}$ if and only if there exists a circle, with $u$ and $v$ on the boundary, which does not contain any point of $\mathcal{P}$ in its interior. Equivalently, there is a triangle $\triangle uvw$ if there is a circle with $u,v,w$ on its boundary which does not contain any point of $\mathcal{P}$ in its interior. 
To avoid degeneracies, we usually assume that the point set $\mathcal{P}$ is in \emph{general position}. Specifically, we assume that no three points lie on a common line and no four points lie on the boundary of a common circle.

Variations of the Delaunay triangulation exist, for example when the convex shape used is no longer a circle. These variations have been shown to be spanners as well~\cite{DBLP:conf/soda/BoseCHS19},\cite{DBLP:conf/wg/BonichonGHI10},\cite{DBLP:journals/jocg/Perkovic0T21},\cite{DBLP:journals/jcss/Chew89},\cite{DBLP:conf/esa/RenssenSSW23}, etc. In this paper, we consider the Delaunay graph where the convex shape is a parallelogram. 

Consider an arbitrary parallelogram $P$.
Let us denote the two side lengths by $\ell$ and $s$, respectively, where $\ell \geq s > 0$. We refer to the side with length $\ell$ as the \emph{long side}, and to the side with length $s$ as the \emph{short side} (even though the case where $\ell = s$ is allowed).
The \emph{parallelogram Delaunay graph} is a geometric graph with $\mathcal{P}$ as the vertex set. Given two vertices $a,b\in\mathcal{P}$, there is an edge between $a$ and $b$ if and only if there exists a scaled translate of $P$ with $a$ and $b$ on its boundary which contains no vertices of $\mathcal{P}$ in its interior.
Observe that different parallelograms give different Delaunay graphs.
Moreover, we emphasize the fact that rotations are \emph{not} allowed, only scaled translate of $P$.

Without loss of generality, we assume that the long side is vertical and the short side has non negative slope. Let $\theta_0$ be the non obtuse angle between the long and the short side of $P$. Note that any other orientation of the parallelogram can be seen as an isometry of the point set.  
The general position assumption we make on the point set $\mathcal{P}$ in this context is the following. We assume that no four vertices lie on the boundary of any scaled translate parallelogram of $P$ and that no two vertices lie on a line parallel to the sides of $P$. This means that no two vertices lie on a vertical or a horizontal line. 
In general parallelogram Delaunay graphs are near-triangulations. A \emph{near-triangulation} is a planar graph such that every bounded face is a triangle. As such, we denote them by $T_{\mathcal{P}}$ or simply $T$ when the point set is clear from the context.

\subsection{Known Results}
\label{sect.known.results}

Currently, a tight bound is not know for the classical Delaunay triangulation. The best known upper bound is $1.998$~\cite{DBLP:journals/siamcomp/Xia13}, while the best known lower bound is $1.5932$~\cite{DBLP:conf/cccg/XiaZ11}. Variations of the Delaunay graph for which we know the tight spanning ratio are limited; there are only four convex shapes for which we know the exact spanning ratio of their Delaunay graph: the equilateral triangle, the square, the regular hexagon and the rectangle. When the equilateral triangle is used, this is often referred to as the TD Delaunay, for triangular distance, in literature. Chew~\cite{DBLP:journals/jcss/Chew89} showed that in this case, the spanning ratio is exactly $2$. When the convex shape is a square, Chew showed that the spanning ratio is at most $\sqrt{10}$.
After more than $25$ years, Bonichon et al.~\cite{DBLP:journals/comgeo/BonichonGHP15}
had a breakthrough proving the exact spanning ratio of $\sqrt{4+2\sqrt{2}} \approx 2.61$ for the case where the convex shape is a square.
This motivated researchers to search for the exact spanning ratio with respect to other shapes.
Perkovic et al.~\cite{DBLP:journals/jocg/Perkovic0T21} showed the spanning ratio is exactly $2$ for hexagons. Inspired by Bonichon et al.~\cite{DBLP:journals/comgeo/BonichonGHP15}
van Renssen et al.~\cite{DBLP:conf/esa/RenssenSSW23} showed that the spanning ratio is exactly $\sqrt{2}\sqrt{A^2+1+A\sqrt{1+A^2}}$, for rectangles where $A$ is the aspect ratio. 

Another example of geometric graphs defined on a point set $\mathcal{P}$ that are spanners are $\Theta_k$-graphs. $\Theta_k$-graphs are defined for any integer $k\geq 3$. First, for each $i$ (where $0\leq i <k$), let $\mathcal{R}_i$ be the ray emanating from the origin that forms an angle of $\frac{2i\pi}{k}$ with the negative $y$ axis (by convention $\mathcal{R}_k=\mathcal{R}_0$). For a point $v\in\mathcal{P}$ and an index $i$ (where $0\leq i <k$), let $R_i^v$ be the ray emanating from $v$ that is parallel to $R_i$. Also define $C_i^v$ to be the cone consisting of all the points in the plane that are strictly between $R_i^v$ and $R_{i+1}^v$ or on $R_{i+1}^v$. Then the $\Theta_k$-graph of a point set $\mathcal{P}$ is the graph that has an edge $(v,w_i)$ if vertex $w_i$ lies in the cone $C_i^v$ and the perpendicular projection of $w_i$ onto the bisector of $C_i^v$ is the closest to $v$ compared to that of all other points in $(\mathcal{P}\backslash\{v\})\cap C_i^v$. Note that any vertex has at most $k$ outgoing edges. For example, the best known upper bound for the spanning ratio of the $\Theta_4$-graph~\cite{DBLP:conf/soda/BoseCHS19} was shown to be $17$. In the case of the $\Theta_5$-graph, the current best known upper bound has been shown to be $\frac{\sin(3\pi/10)}{\sin(3\pi/5)-\sin(3\pi/10)}<5.70$~\cite{DBLP:conf/wads/BoseHO21}. For the $\Theta_6$-graph, Bonichon et al.~\cite{DBLP:conf/wg/BonichonGHI10} showed that the $\Theta_6$-graph is the union of two spanning TD-Delaunay graphs and has a tight spanning ratio of $2$. Moreover, Bose et al.~\cite{DBLP:journals/tcs/BoseCMRV16} showed a tight bound of $1+2\sin(\pi/k)$ for $\Theta_k$-graphs, where the number of cones is $k=4m+2$, for $m\geq 1$. Next, they also show that when $k=4m+4$ for $m\geq 1$, the $\Theta_k$-graph has spanning ratio at most $\frac{1+2\sin(\pi/k)}{\cos(\pi/k)-sin(\pi/k)}$ and at least $1+2\tan(\pi/k)+2\tan^2(\pi/k)$ and for $\Theta_k$-graphs with $k=4m+3$ or with $k=4m+5$ cones, the spanning ratio is at most $\frac{\cos(\pi/2k)}{\cos(\pi/k)-\sin(3\pi/2k)}$. 

Yao graphs are defined in a similar way as the $\Theta_k$-graphs, where we consider cones around each vertex. Formally, let $Y_k$ denote the Yao graph with fixed integer $k>0$.  The point set is then partitioned such that around each vertex we have $k$ equiangular cones of angle $2\pi/k$. Each vertex is then connected to its closest neighbour in each cone. For all integer values of $k$, it is known whether or not  $Y_k$ is a geometric spanner~\cite{DBLP:journals/jocg/BarbaBDFKORTVX15}. For example, Barba et al.~\cite{DBLP:journals/jocg/BarbaBDFKORTVX15}, showed an upper bound on the spanning ratio for odd $k\geq 5$ of $1/(1-2\sin(\frac{3\cdot 2\pi}{8k}))$, for the case where $k=5$, Barba et al. further improved the upper bound, thus showing that $Y_5$ has a spanning ratio of at most $2+\sqrt{3} \approx 3.74$.  Next, the authors also show that in the case where $k=6$, $Y_k$ has a spanning ratio of at most $5.8$. Barba et al., also give a lower bound of $2.87$ for $Y_5$. While some bounds are known for the spanning ratio of Yao graphs, there are currently no known tight bound on their spanning ratios. 

\subsection{Our Contributions}

This paper further generalizes and extends the proof given by Bonichon et al.~\cite{DBLP:journals/comgeo/BonichonGHP15}, and van Renssen et al.~\cite{DBLP:conf/esa/RenssenSSW23}.
All together, we get an exact bound 
of
$$\frac{\sqrt{2}\sqrt{1+A^2+2A\cos(\theta_0)+(A+\cos(\theta_0))\sqrt{1+A^2+2A\cos(\theta_0)}}}{\sin(\theta_0)},$$
on the spanning ratio of the parallelogram Delaunay graph,
where $A$ is the aspect ratio and $\theta_0$ is the non-obtuse angle of the parallelogram.

We summarize our proof technique here.
To obtain an upper bound on the spanning ratio, we proceed by induction on the rank of the distances between pairs of points. To go from a start point $a$ to a destination point $b$, we only consider the paths that use endpoints of segments that cross the line segment $ab$. If a point is above (respectively below) the line $ab$, we refer to that point as a high (resp. low) point. Moreover, we only consider the case where the start point $a$ is on the bottom left corner and the destination point $b$ is on the rightmost side of the parallelogram.  

To bound the length of the path from $a$ to $b$, we break the path into three subpaths. The first part is bounded using the so-called \emph{Crossing Lemma} (see below). The second part considers the length of the subpath where the points all have $y$-coordinates that are significantly greater than that of $b$. On this path, the points are either all high points or all low points. The third part of the path is bounded using the induction hypothesis. 

The aforementioned Crossing Lemma is a crucial part in the proof.  The goal of the Crossing Lemma is to get a bound on the length of the path from $a$ until we encounter a parallelogram that has an edge between a high and a low point that is not \emph{too steep}. Using an inductive argument, we use techniques such as bounding monotone paths using the $L_1$-norm as well as bounding edge lengths in terms of their horizontal component. 

\section{Main Result}
\subsection{The Upper Bound}
Consider an arbitrary parallelogram $P$.
Let us denote the two side lengths by $\ell$ and $s$, respectively, where $\ell \geq s > 0$. We refer to the side with length $\ell$ as the \emph{long side}, and to the side with length $s$ as the \emph{short side} (even though the case $\ell = s$ is allowed). 
Without loss of generality, we assume that the long side is vertical and the short side has non-negative slope, let $\theta_0$ be the non obtuse angle between the long and the short side of $P$.
We define the aspect ratio of $P$, denoted by $A$, as $A:=\frac{\ell}{s}$.  For a point $a$ in the plane, let $x_a$ and $y_a$ be the $x$- and $y$-coordinates of $a$, respectively. Since the parallelogram Delaunay graph of a point set is a near-triangulation, we denote it as $T$.

In this section, we prove that the worst-case spanning ratio of $T$ is at most $h(A, \theta_0):=$
$$\frac{\sqrt{2}\sqrt{1+A^2+2A\cos(\theta_0)+(A+\cos(\theta_0))\sqrt{1+A^2+2A\cos(\theta_0)}}}{\sin(\theta_0)}.$$
We obtain this result after a few preparatory lemmas and observations.

Let $a,b$ be any two vertices in $T$. Without loss of generality, assume $a=(0,0)$ and $x_b > 0$. Let $d_2^T(a,b)$ be the length of the shortest path in $T$ between $a$ and $b$. 
We prove that $d_2^T(a,b)$ is at most $h(A,\theta_0)\,d_2(a,b)$ with equality occurring in the worst case, where $d_2(a,b)$ is the Euclidean distance from $a$ to $b$.
We denote the slope of the segment $ab$ as $S:=\frac{y_b}{x_b}$. Our proof considers four different scenarios, based on the value of $S$.
Each scenario is illustrated in Figure~\ref{fig:para2}. 
\begin{figure}
   \captionsetup{justification=centering}
   \centerline{
    \includegraphics[scale=1]{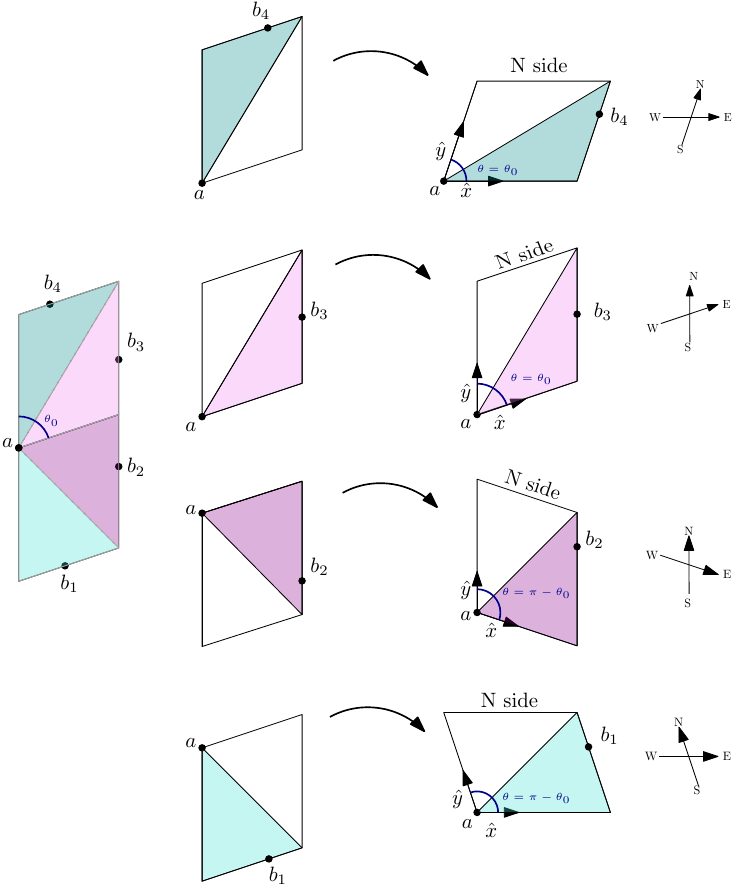}}
    \caption{The four scenarios considered about the slope between $a$ and $b$.}
    \label{fig:para2}
\end{figure}

In each scenario, we write the coordinates of all points with respect to a new basis\footnote{We do not perform a linear transformation. We simply write some objects with respect to the $\{\hat{x},\hat{y}\}$ basis instead of the usual basis. All distances stay the same.} $\{\hat{x},\hat{y}\}$.
We will denote the counterclockwise angle between $\hat{x}$ and $\hat{y}$ by $\theta$. Using this new basis, $P$ will be seen as a rectangle.
We denote by $\hat{P}$ the parallelogram $P$ written in the $\{\hat{x},\hat{y}\}$ basis.
\begin{description}
\item[Scenario 1] In this scenario, there is a homothet of $P$ with $a$ on its top left corner and $b$ on its lowest short side. Formally, the slope between $a$ and $b$ satisfies
$$S\in  \left(-\infty, \frac{\cos(\theta_0)-A}{\sin(\theta_0)}\right].$$
In this scenario, the usual basis $\{(1,0),(0,1)\}$ will be sent to $\hat{x}:= (0,-1)$, $\hat{y}:=\left(\sin(\theta_0),\cos(\theta_0)\right)$ (observe that $\frac{1}{A} \hat{x}_b \geq \hat{y}_b$). 
The interior counterclockwise angle $\theta$ between $\hat{x}$ and $\hat{y}$ is then $\theta = \pi- \theta_0$.

In Figure~\ref{fig:para2},
the slope between $a$ and $b_1$ falls into this case. As seen on the first column of Figure~\ref{fig:para2}, there is a homothet of $P$ with $a$ on its top left corner and $b_1$ on its lowest short side. 

As an example, suppose the vertices of $P$ have coordinates $(0,0)$, $(\sin(\theta_0),\cos(\theta_0))$, $(0,-2)$ and $(\sin(\theta_0),\cos(\theta_0)-2)$, respectively. Then, in the new basis, the vertices of $\hat{P}$ have coordinates $(0,0)$, $(0,1)$, $(2,0)$ and $(2,1)$, respectively.

\item[Scenario 2] In this scenario, there is a homothet of $P$ with $a$ on its top left corner and $b$ on its furthest long side. Formally, the slope between $a$ and $b$ satisfies $$S\in \left(\frac{\cos(\theta_0)-A}{\sin(\theta_0)}, \frac{\cos(\theta_0)}{\sin(\theta_0)}\right].$$
In this scenario, the usual basis $\{(1,0),(0,1)\}$ will be sent to $\hat{x}:=\left(\sin(\theta_0),\cos(\theta_0)\right)$, $\hat{y}:=(0,-1)$ (observe that $A \hat{x}_b \geq \hat{y}_b$).
The interior counterclockwise angle $\theta$ between $\hat{x}$ and $\hat{y}$ is then $\theta = \pi- \theta_0$.

In Figure~\ref{fig:para2},
the slope between $a$ and $b_2$ falls into this case. As seen on the first column of Figure~\ref{fig:para2}, there is a homothet of $P$ with $a$ on its top left corner and $b_2$ on its furthest long side. 

As an example, suppose the vertices of $P$ have coordinates $(0,0)$, $(\sin(\theta_0),\cos(\theta_0))$, $(0,-2)$ and $(\sin(\theta_0),\cos(\theta_0)-2)$, respectively. Then, in the new basis, the vertices of $\hat{P}$ have coordinates $(0,0)$, $(1,0)$, $(0,2)$ and $(1,2)$, respectively.

\item[Scenario 3] In this scenario, there is a homothet of $P$ with $a$ on its bottom left corner and $b$ on its furthest long side. Formally, the slope between $a$ and $b$ satisfies $$S\in \left(\frac{\cos(\theta_0)}{\sin(\theta_0)}, \frac{\cos(\theta_0)+A}{\sin(\theta_0)}\right].$$
In this scenario, the usual basis $\{(1,0),(0,1)\}$ will be sent to $\hat{x}:= \left(\sin(\theta_0),\cos(\theta_0)\right)$, $\hat{y}:=(0,1)$ (observe that $A \hat{x}_b \geq \hat{y}_b$). The interior counterclockwise angle $\theta$ between $\hat{x}$ and $\hat{y}$ is then $\theta= \theta_0$.

In Figure~\ref{fig:para2},
the slope between $a$ and $b_3$ falls into this case. As seen on the first column of Figure~\ref{fig:para2}, there is a homothet of $P$ with $a$ on its bottom left corner and $b_3$ on its furthest long side. 

As an example, suppose the vertices of $P$ have coordinates $(0,0)$, $(\sin(\theta_0),\cos(\theta_0))$, $(0,2)$ and $(\sin(\theta_0),\cos(\theta_0)+2)$, respectively. Then, in the new basis, the vertices of $\hat{P}$ have coordinates $(0,0)$, $(1,0)$, $(0,2)$ and $(1,2)$, respectively.

\item[Scenario 4] In this scenario, there is a homothet of $P$ with $a$ on its bottom left corner and $b$ on its highest short side. Formally, the slope between $a$ and $b$ satisfies $$S\in \left(\frac{\cos(\theta_0)+A}{\sin(\theta_0)}, \infty\right).$$
In this scenario, the usual basis $\{(1,0),(0,1)\}$ will be sent to $\hat{x}:= (0,1)$, $\hat{y}:=\big(\sin(\theta_0),\cos(\theta_0)\big)$ (observe that $\frac{1}{A} \hat{x}_b \geq \hat{y}_b$). The interior counterclockwise angle $\theta$ between $\hat{x}$ and $\hat{y}$ is then $\theta= \theta_0$.

In Figure~\ref{fig:para2},
the slope between $a$ and $b_4$ falls into this case. As seen on the first column of Figure~\ref{fig:para2}, there is a homothet of $P$ with $a$ on its  bottom left  corner and $b_4$ on its highest short side. 

As an example, suppose the vertices of $P$ have coordinates $(0,0)$, $(\sin(\theta_0),\cos(\theta_0))$, $(0,2)$ and $(\sin(\theta_0),\cos(\theta_0)+2)$, respectively. Then, in the new basis, the vertices of $\hat{P}$ have coordinates $(0,0)$, $(0,1)$, $(2,0)$ and $(2,1)$, respectively.

\end{description}

In the second column of Figure~\ref{fig:para2}, we highlight the homothets of $\hat{P}$ (with $a$ and $b$ on the boundary) on which our analysis will be based. In the third column we show the homothets of $\hat{P}$ in the new basis $\{ \hat{x},\hat{y} \}$ for each scenario.  Notice that $\hat{P}$ is an axis aligned rectangle in the $\{\hat{x},\hat{y}\}$ basis. We define the W and E sides of $\hat{P}$ to be the two sides parallel  to $\hat{y}$ with the W side having smaller $\hat{x}$-coordinate. Similarly, we define the N and S sides of $\hat{P}$ to be the two sides parallel  to $\hat{x}$ with the N side having larger $\hat{y}$-coordinate. A point on the east edge of $\hat{P}$ is said to be \emph{eastern}. The \emph{directions} N, S, E, W are defined accordingly, refer to the last column of Figure~\ref{fig:para2}.

Observe that the distance between the origin and any point $\alpha\hat{x}+\beta\hat{y}$ in all scenarios is equal to
\begin{align*}
    \|\alpha \hat{x} +\beta \hat{y}  \|_2 = \sqrt{\alpha^2+\beta^2-2\alpha\beta\cos(\pi-\theta)} \leq \sqrt{\alpha^2+\beta^2+2\alpha\beta|\cos(\theta)|} .
\end{align*}
Moreover, observe that the $L_1$-norm in the usual basis is equal to the $L_1$-norm in the $\{\hat{x},\hat{y}\}$-basis.

 As a first step, we consider the triangles $T_1,T_2,...,T_k$ intersecting the line segment $ab$ ordered from $a$ to $b$. We will show, in essence, that the shortest path between $a$ and $b$ in the graph induced by this collection of triangles is a spanning path whose spanning ratio is at most $h(A,\theta_0)$. Note that the triangulation of $\mathcal{P}$ may not contain its convex hull. As such, the sequence of triangles may not exist. For example, the sequence of triangles from $l_1$ to $l_5$ is undefined in Figure \ref{fig:exmaple}. In order to guarantee that this sequence of triangles is well defined, we augment the point set in the following way.
Consider the triangulation of a larger point set $\mathcal{P'}=\mathcal{P}\cup\{p_1,p_2,p_3,p_4\}$, where $\{p_1,p_2,p_3,p_4\}$ is defined as follows.
The points in $\{p_1,p_2,p_3,p_4\}$ are the vertices of a homothet of $P$ containing $\mathcal{P}$ such that the distance from any point in $\mathcal{P}$ to any point in $\{p_1,p_2,p_3,p_4\}$ is arbitrarily large.
Adding these points guarantees that the larger triangulation contains all the edges of the convex hull of $\mathcal{P'}$. Moreover, all the edges of the triangulation of $\mathcal{P'}$ that are not in the triangulation of $\mathcal{P}$, by construction, are arbitrarily long. Therefore, these edges will not be used in any bounded path. In what follows, we bound the length of the shortest path between points in $\mathcal{P}$.

We will denote the triangle with vertices $u$, $v$ and $w$ by $\triangle uvw$. Note that with the additional four points, every face that intersects the line segment $ab$ is bounded and therefore is a triangle. 
 The boundary of each triangle intersects $ab$ twice.
Let $h_i$ and $l_i$ denote the endpoints of the edge in $T_i$ intersecting line segment $ab$ closest to $b$.
Note that $h_i$ is above $ab$ and $l_i$ is below $ab$
(with respect to the $\{\hat{x},\hat{y}\}$ basis\footnote{We do not use the notation $\hat{h}_i$ and 
 $\hat{l}_i$ even though these two points are defined with respect to the $\{\hat{x},\hat{y}\}$ basis. This would make expressions like $\hat{x}_{\hat{h}_i}$ too heavy.}).
Every pair of consecutive triangles along the line segment $ab$ shares two vertices. As such, for all $1 \leq i < k$, either $l_{i}=l_{i+1}$ or $h_{i}=h_{i+1}$. We define $h_{0}=l_0=a$, and we let $l_k = h_k=b$.
Figure~\ref{fig:exmaple} shows the triangles as well as their respective parallelogram $\hat{P}_i$ with $h_i$ and $l_i$ in Scenario 2.
\begin{figure}
   \centerline{
    \includegraphics[scale=1]{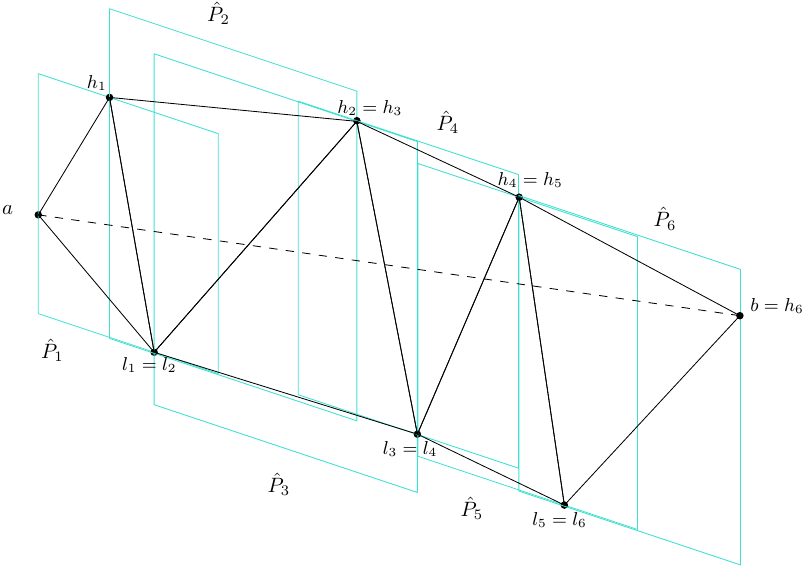}}
    \caption{Illustration of the triangles intersecting $ab$.\label{fig:exmaple}}
\end{figure}
 For each triangle $T_i$ (or $\hat{T}_i$ in the $\{\hat{x},\hat{y}\}$ basis), there is a corresponding scaled translate $P_i$ (or $\hat{P}_i$ in the $\{\hat{x},\hat{y}\}$ basis) of $P$ (or $\hat{P}$ in the $\{\hat{x},\hat{y}\}$ basis). The parallelogram $P_i$ contains no point of $\mathcal{P}$ in its interior and has the three vertices of $T_i$ on its boundary. 
We define the
parameter $L$ as the positive slope of the diagonal of $\hat{P}$ expressed in the $\{\hat{x}, \hat{y}\}$
basis.
The third column of Figure~\ref{fig:para2} illustrates these diagonals\footnote{Note that the slope in the $\{\hat{x},\hat{y}\}$ basis corresponds to the ratio of the sides of $\hat{P}$.}.
If the long side of $\hat{P}$ is vertical, then $L = A$, as in Scenarios 2 and 3. Otherwise, $L = 1/A$, as in Scenarios 1 and 4.

For any point $u$, we denote $\hat{x}_{u}$ and $\hat{y}_u$ to be the coordinates of $u$ in its $\{\hat{x}, \hat{y}\}$ coordinate system. In other words, $u=\hat{x}_u\hat{x}+\hat{y}_u\hat{y}$.
\begin{definition}
     A {\em gentle} edge $(u,v)$ has the property that $|\hat{y}_v-\hat{y}_u| \leq L|{\hat{x}_v-\hat{x}_u}|$. Otherwise, the edge is {\em steep}.
\end{definition}
We next define the notion of an \emph{inductive} parallelogram. 
\begin{definition}
     If the edge $(h_i, l_i)$ is gentle, then $\hat{P}_i$ is defined to be \emph{inductive}. The point with the larger $\hat{x}$-coordinate among $h_i$ and $l_i$ is said to be the {\em inductive point} of  $\hat{P}_i$, denoted by $c$. 
\end{definition}
We use a potential to bound the spanning ratio. This potential is tied to a parallelogram. We highlight the exact relationship between the potential and the parallelogram below. Note that our potential is very similar to the potential introduced by Bonichon et al.~\cite{DBLP:journals/comgeo/BonichonGHP15} and also used by van Renssen et al.~\cite{DBLP:conf/esa/RenssenSSW23} who generalized it to account for the aspect ratio. We further modify this approach to account for the angle $\theta_0$ in a parallelogram. 
Let $d_{\hat{P}_i}(h_i,l_i)$ be the length of the path when moving clockwise from $h_i$ to $l_i$ along the sides of $\hat{P}_i$.
Note that in the usual $x$, $y$-coordinate system,  this path may be counter-clockwise. For example in Figure~\ref{fig:clock}, we show in blue the length of $d_{\hat{P}_{uw}}(u,w)$ in Scenario 2. On the left, we have $\hat{P}_{uw}$ and on the right we have $P_{uw}$ (in the usual $\{x,y\}$ basis). When in the $\{\hat{x},\hat{y}\}$ basis, the length of $d_{\hat{P}_{uw}}(u,w)$ is the length of the path when going from $u$ to $w$, whereas in the usual basis, this length is given by the counter clockwise path from $u$ to $w$ (or the clockwise path from $w$ to $u$).
\begin{figure}
   \captionsetup{justification=centering}
   \centerline{
    \includegraphics[scale=1]{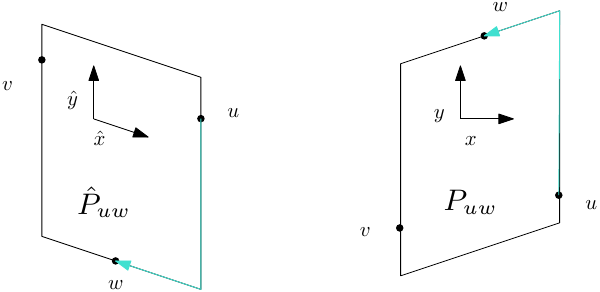}}
    \caption{Illustration of the length of $d_{\hat{P}_{uw}}(u,w)$ in Scenario 2.}
    
    \label{fig:clock}
\end{figure}

\begin{definition}
   Parallelogram $\hat{P}_i$ \emph{has a potential} if $$d_2^T(a,h_i)+d_2^T(a,l_i)+d_{\hat{P}_i}(h_i,l_i)\leq (2+2L)\hat{x}_{i},$$ where $\hat{x}_i$ is the $\hat{x}$-coordinate of the E side of $\hat{P}_i$.
 \end{definition}

Next, we define $P(a,b)$ as the parallelogram $\{\alpha\hat{x}+\beta\hat{y}: \hat{x}_a \leq \alpha \leq \hat{x}_b,\hat{y}_a \leq \beta \leq \hat{y}_b \}$
(refer to Figure~\ref{fig:P(a,b)}).
\begin{figure}
   \centerline{
    \includegraphics[scale=1]{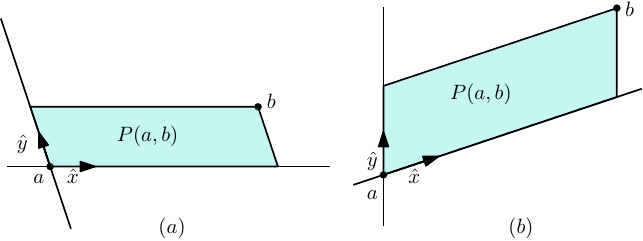}}
    \caption{Illustration of 
 $P(a,b)$ in Scenarios 1 and 3.\label{fig:P(a,b)}}
\end{figure}
Observe that in general, $P(a,b)$ is not a homothet of  $P$ or of $\hat{P}$.

The following lemma describes how the potential of one parallelogram is related to the next.
\begin{lemma}\label{lem:lem4}
    If $(a,b)$ is not an edge in $T$ and parallelogram $P(a,b)$ contains no point of $\mathcal{P}$ other than $a$ and $b$, then $\hat{P}_1$  has a potential. Furthermore, if, for any $1 \leq i <k$, $\hat{P}_i$ has a potential but is not inductive, then $\hat{P}_{i+1}$ has a potential.
\end{lemma}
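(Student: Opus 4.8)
The plan is to prove both assertions by establishing the displayed potential inequality directly in the $\{\hat{x},\hat{y}\}$ basis, where $\hat{P}$ and every $\hat{P}_i$ is an axis-aligned rectangle. Throughout I use two facts recorded above: distances are basis-independent, and, since $|\cos\theta|\le 1$, the displayed norm bound gives $d_2(u,v)\le |\hat{x}_v-\hat{x}_u|+|\hat{y}_v-\hat{y}_u|$ for any two points; that is, a Euclidean chord is never longer than the monotone staircase joining its endpoints along the $\hat{x}$- and $\hat{y}$-directions. Because the sides of each $\hat{P}_i$ run parallel to the unit vectors $\hat{x}$ and $\hat{y}$, the clockwise boundary length $d_{\hat{P}_i}(\cdot,\cdot)$ is exactly the sum of the traversed coordinate extents, so the full perimeter of $\hat{P}_i$ equals $(2+2L)$ times its width, which is $(2+2L)\hat{x}_i$ once we normalize $\hat{x}_a=0$.

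For the base case, $a$ is a vertex of the first triangle $T_1$ and $(h_1,l_1)$ is the edge of $T_1$ opposite $a$; hence $(a,h_1)$ and $(a,l_1)$ are edges of $T$, giving $d_2^T(a,h_1)\le d_2(a,h_1)$ and $d_2^T(a,l_1)\le d_2(a,l_1)$. The scenario setup places $a$ at a corner of $\hat{P}_1$ on its $W$ side, and the hypothesis that $P(a,b)$ is empty (together with $(a,b)$ not being an edge) guarantees that $a$ lies on the clockwise arc of $\partial\hat{P}_1$ running from $l_1$ to $h_1$, with both subarcs $l_1\rightarrow a$ and $a\rightarrow h_1$ monotone. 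I would split this arc at $a$ and apply the staircase bound to each subarc, obtaining $d_2(a,h_1)+d_2(a,l_1)\le d_{\hat{P}_1}(l_1,a)+d_{\hat{P}_1}(a,h_1)$, which is exactly the perimeter minus $d_{\hat{P}_1}(h_1,l_1)$. Rearranging yields the potential inequality $d_2^T(a,h_1)+d_2^T(a,l_1)+d_{\hat{P}_1}(h_1,l_1)\le (2+2L)\hat{x}_1$.

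For the inductive step, $\hat{P}_i$ has a potential and is not inductive, so $(h_i,l_i)$ is steep. Since $T_i$ and $T_{i+1}$ share the edge $(h_i,l_i)$, the triangle $T_{i+1}$ has a third vertex $v$, and either $v$ is high (so $h_{i+1}=v$, $l_{i+1}=l_i$) or $v$ is low (so $l_{i+1}=v$, $h_{i+1}=h_i$); I treat the high case, the low case being symmetric. Using that $(h_i,v)$ is an edge of $T$, I bound $d_2^T(a,v)\le d_2^T(a,h_i)+d_2(h_i,v)$, substitute into the potential of $\hat{P}_{i+1}$, and invoke the potential of $\hat{P}_i$ to cancel the shortest-path terms. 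This reduces the whole step to the purely local, geometric inequality
\begin{equation*}
d_2(h_i,v)+d_{\hat{P}_{i+1}}(v,l_i)-d_{\hat{P}_i}(h_i,l_i)\le (2+2L)(\hat{x}_{i+1}-\hat{x}_i).
\end{equation*}

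The main obstacle is verifying this local inequality. I would bound the new edge $d_2(h_i,v)$ by its $L_1$-length $|\hat{x}_v-\hat{x}_{h_i}|+|\hat{y}_v-\hat{y}_{h_i}|$, rewrite both clockwise boundary paths as sums of coordinate extents of $\hat{P}_i$ and $\hat{P}_{i+1}$, and compare these against the budget $(2+2L)(\hat{x}_{i+1}-\hat{x}_i)$ coming from the rightward shift of the $E$ side. The steepness of $(h_i,l_i)$ is what makes this affordable: it forces $d_{\hat{P}_i}(h_i,l_i)$ to be large, since the clockwise arc must climb nearly the full height of $\hat{P}_i$, while the emptiness of $\hat{P}_{i+1}$ constrains both $v$ and the dimensions of $\hat{P}_{i+1}$ relative to $h_i,l_i$. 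The delicate part is the bookkeeping of which sides of $\hat{P}_i$ and $\hat{P}_{i+1}$ the points $h_i$, $l_i$, $v$ occupy; I expect to split into a small number of subcases according to these incidences (mirrored for the low case) and to check the inequality in each. Establishing these side-incidence subcases cleanly, rather than the algebra within each, is where the real work lies.
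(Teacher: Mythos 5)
Your overall strategy is the same as the paper's: establish the base case by bounding $d_2^T(a,h_1)+d_2^T(a,l_1)+d_{\hat{P}_1}(h_1,l_1)$ by the perimeter $(2+2L)\hat{x}_1$ of $\hat{P}_1$ (after ruling out $a$ lying on the S side using the emptiness of $P(a,b)$), and propagate the potential by reducing the inductive step to the local inequality
\begin{equation*}
d_2(h_i,v)+d_{\hat{P}_{i+1}}(v,l_i)-d_{\hat{P}_i}(h_i,l_i)\le (2+2L)(\hat{x}_{i+1}-\hat{x}_i).
\end{equation*}
That reduction is correct and is exactly the inequality the paper needs. The gap is that you stop there: you state a plan to verify this inequality by a joint case analysis over which sides of $\hat{P}_i$ and $\hat{P}_{i+1}$ the three points $h_i$, $l_i$, $v$ occupy, and you explicitly defer the bookkeeping, which is where all the content of the lemma lives. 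As written, the crux is unproven.

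Moreover, the route you sketch is harder than necessary because it drags $v$ into the case analysis. The paper's key simplification is to factor the local inequality into two pieces. First, since the third vertex $v$ of $T_{i+1}$ lies on the boundary of $\hat{P}_{i+1}$ on the clockwise arc from $h_i$ to $l_i$, one gets $d_2(h_i,v)+d_{\hat{P}_{i+1}}(v,l_i)\le d_{\hat{P}_{i+1}}(h_i,l_i)$ immediately (chord length is at most arc length), with no case analysis and no dependence on where exactly $v$ sits. This reduces the local inequality to $d_{\hat{P}_{i+1}}(h_i,l_i)-d_{\hat{P}_i}(h_i,l_i)\le (2+2L)(\hat{x}_{i+1}-\hat{x}_i)$, a statement about the two rectangles and the shared steep edge only. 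Second, that comparison is checked in a three-way case analysis on the sides occupied by $h_i$ and $l_i$: assuming without loss of generality $\hat{x}_{l_i}<\hat{x}_{h_i}$, steepness forces $l_i$ onto the S side of $\hat{P}_i$ and $h_i$ onto the N side of $\hat{P}_{i+1}$ (it cannot be on the W side of $\hat{P}_{i+1}$ since $l_i$ is to its left), and then either $\hat{P}_{i+1}$ is a translate of $\hat{P}_i$ (difference exactly $2(\hat{x}_{i+1}-\hat{x}_i)$) or $l_i$ moves to the W side (difference at most $(2+2L)(\hat{x}_{i+1}-\hat{x}_i)$). Ruling out the W-side placement of $h_i$ in $\hat{P}_{i+1}$ is essential --- without it the arc $d_{\hat{P}_{i+1}}(h_i,l_i)$ could grow by more than the budget --- and your proposal does not identify this as the decisive constraint. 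To complete your proof you should adopt this factorization, or at minimum carry out the side-incidence analysis you postponed.
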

\begin{proof}
   Firstly, $a$ cannot lie on the S side of $\hat{P}_1$ as it would imply that $l_1$ is on the E side and hence in the interior of $P(a,b)$, which we assumed to be empty. Refer to Figure~\ref{fig:P1}. Therefore, $a$ must be on the W side of $\hat{P}_1$, while $h_1$ must lie on the N or the E side, and $l_1$ must lie on the S or E side.
    Observe that the perimeter of $\hat{P}_1$, which is $(2+2L)\hat{x}_{1}$ in the $\{\hat{x}, \hat{y}\}$ basis, upper bounds $d_2^T(a,h_1)+d_2^T(a,l_1)+d_{\hat{P}_1}(h_1,l_1)$. 
    \begin{figure}
   \captionsetup{justification=centering}
   \centerline{
    \includegraphics[scale=1]{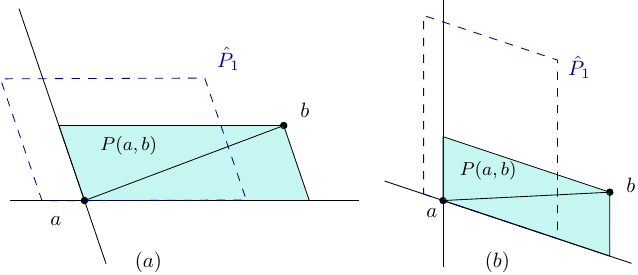}}
    \caption{Example showing the positioning of $\hat{P}_1$ if $a$ is on the S side in Scenario 1 on the left $(a)$, and Scenario 2 on the right $(b)$.}
    \label{fig:P1}
\end{figure}

     From now on, suppose that for some $1\leq i <k$, $\hat{P}_i$ has a potential but is not inductive. Since $\hat{P}_i$ is not inductive we know that $(h_i, l_i)$ is steep. It suffices to consider the case where $\hat{x}_{l_i}< \hat{x}_{h_i}$ since the case where $\hat{x}_{h_i}< \hat{x}_{l_i}$ can be handled similarly. 
\begin{figure}
\begin{center}
    \captionsetup{justification=centering}
    \includegraphics[scale=1]{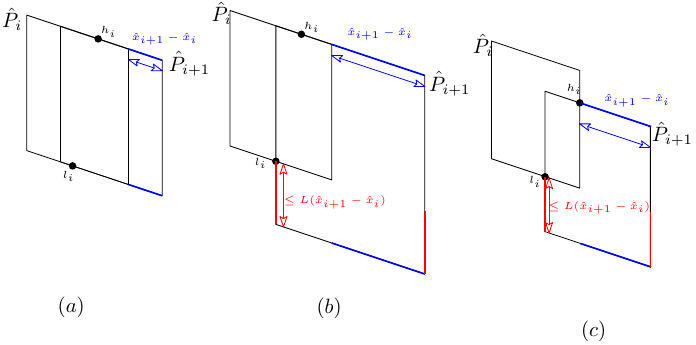}
    \caption{Illustration of the proof of Lemma~\ref{lem:lem4} for Scenario 2.\label{fig:lem4}}
\end{center}
\end{figure}

     Since $\hat{x}_{l_i}< \hat{x}_{h_i}$, we know that $l_i$ must be on the S side of $\hat{P}_{i}$ while $h_i$ can be on the N or E side of $\hat{P}_{i}$ . If $h_i$ is on the N side of $\hat{P}_{i}$, then since $\hat{x}_{l_i}< \hat{x}_{h_i}$, then $h_i$ must be on the N side of $\hat{P}_{i+1}$  and $l_i$ is either on the W or S side of $\hat{P}_{i+1}$. If $l_i$ is on the S side of $\hat{P}_{i+1}$ (refer to Figure~\ref{fig:lem4} Case (a)), we have that $\hat{P}_{i+1}$ is just a translate of $\hat{P}_{i}$, as such we have that 
    \begin{align}~\label{eq:lem4.1}
        d_{\hat{P}_{i+1}}(h_i,l_i)-d_{\hat{P}_{i}}(h_i,l_i)=2(\hat{x}_{i+1}-\hat{x}_i).
    \end{align}
     Next, in the case where $l_i$ is on the W side of $\hat{P}_{i+1}$, (refer to Figure~\ref{fig:lem4} Case (b)), then we have that 
    \begin{align}~\label{eq:lem4.2}
        d_{\hat{P}_{i+1}}(h_i,l_i)-d_{\hat{P}_{i}}(h_i,l_i)\leq (2+2L)(\hat{x}_{i+1}-\hat{x}_i).
    \end{align}
    Consider now the case where $h_i$ is on the E side of $\hat{P}_{i}$. Recall that $\hat{x}_{l_i}< \hat{x}_{h_i}$, hence  $h_i$ must lie the N side of $\hat{P}_{i+1}$ and $l_i$ is either on the S side or on the W side of $\hat{P}_{i+1}$ (refer to Figure~\ref{fig:lem4} Case (c)). If $l_i$ is on the S side of $\hat{P}_{i+1}$ then~\eqref{eq:lem4.1} holds. If $l_i$ is on the W side of $\hat{P}_{i+1}$ then~\eqref{eq:lem4.2} holds. In all cases we have that $$d_{\hat{P}_{i+1}}(h_i,l_i)-d_{\hat{P}_{i}}(h_i,l_i)\leq (2+2L)(\hat{x}_{i+1}-\hat{x}_i).$$
 Since $\hat{P}_{i}$ has potential we know that
    \begin{align*}
        d_2^T(a,h_i)+d_2^T(a,l_i)+d_{\hat{P}_{i}}(h_i,l_i) \leq (2+2L)\hat{x}_{i} .
    \end{align*}
    Observe that 
    \begin{align*}
        &\ d_2^T(a,h_i)+d_2^T(a,l_i)+d_{\hat{P}_{i}}(h_i,l_i)+d_{\hat{P}_{i+1}}(h_i,l_i)-d_{\hat{P}_{i}}(h_i,l_i)\\
        =&\ d_2^T(a,h_i)+d_2^T(a,l_i)+d_{\hat{P}_{i+1}}(h_i,l_i).
    \end{align*}
    Putting the above equalities and inequalities together we get
    \begin{align}~\label{eq:potential}
        \begin{split}
            d_2^T(a,h_i)+d_2^T(a,l_i)+d_{\hat{P}_{i+1}}(h_i,l_i) &\leq (2+2L)\hat{x}_{i} +(2+2L)(\hat{x}_{i+1}-\hat{x}_i)\\
        &= (2+2L)\hat{x}_{i+1}
        \end{split}
    \end{align}
    Assume $T_{i+1}=\triangle h_i h_{i+1} l_i$ (where $l_i=l_{i+1}$); in other words, $(h_i, h_{i+1})$ is an edge of $T$ with $h_{i+1}$ lying somewhere on the boundary of $\hat{P}_{i+1}$ between $h_i$ and $l_i$, when moving clockwise from $h_i$ to $l_i$. By the triangle inequality, we have  $d_2(h_i,h_{i+1})\leq d_{\hat{P}_{i+1}}(h_i, h_{i+1})$ and
    \begin{align}\label{eq:lem4.3}
        d_2^T(a,h_{i+1})\leq d_2^T(a,h_{i})+d_2(h_i,h_{i+1}) .
    \end{align}
    We also know that
    \begin{align}\label{eq:lem4.4}
        d_{\hat{P}_{i+1}}(h_{i},h_{i+1})\leq d_{\hat{P}_{i+1}}(h_{i},l_{i}).
    \end{align}
    This gives us
    \begin{align}\label{eq:lem4.5}
    \begin{split}
        d_2(h_i,h_{i+1})+d_{\hat{P}_{i+1}}(h_{i+1},l_i) &\leq d_{\hat{P}_{i+1}}(h_{i},h_{i+i})+d_{\hat{P}_{i+1}}(h_{i+1},l_{i})\\
        &=d_{\hat{P}_{i+1}}(h_{i},l_{i}) .
    \end{split}  
    \end{align}
    
    Using inequalities~\eqref{eq:potential},~\eqref{eq:lem4.3},~\eqref{eq:lem4.4} and~\eqref{eq:lem4.5}, we get the following:
    \begin{align*}
        d_2^T(a,h_{i+1})+d_2^T(a, l_{i+1})+d_{\hat{P}_{i+1}}(h_{i+1},l_{i+1}) \leq &\ d_2^T(a,h_i)+d_2(h_i,h_{i+1})\\
        &\ +d_2^T(a,l_i)
        +d_{\hat{P}_{i+1}}(h_{i+1},l_i)\\
        \leq &\ d_2^T(a,h_i)+d_2^T(a,l_i)\\
        &\ +d_{\hat{P}_{i+1}}(h_{i},l_i)\\
        \leq &\ (2+2L)\hat{x}_{i+1},
    \end{align*}
     as required to show $\hat{P}_{i+1}$ has a potential. Note that here $l_{i+1}=l_i$. The argument for the case when $T_{i+1}=\triangle h_{i+1} l_i l_{i+1}$, where $h_i=h_{i+1}$ is symmetric.
\end{proof}

     We now use this notion of potential to prove an upper bound on the $\hat{x}$-distance between $a$ and the inductive point of a parallelogram with potential.

     \begin{lemma}~\label{lem:lem5}
         If parallelogram $\hat{P}_{i}$ has a potential and its inductive point $c$ (either $c=h_i$ or $c=l_i$) lies on the E side of $\hat{P}_{i}$, then $$d_2^T(a,c)\leq (1+L)\hat{x}_c .$$
     \end{lemma}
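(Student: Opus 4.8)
The plan is to obtain the bound directly from the potential inequality, combining it with the triangle inequality for the shortest-path metric and the elementary fact that a straight segment is never longer than a boundary path between the same two endpoints. I expect no case analysis and no coordinate computation to be necessary.

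First I would record that, since $c$ lies on the E side of $\hat{P}_i$, its $\hat{x}$-coordinate equals $\hat{x}_i$, the $\hat{x}$-coordinate of the E side; hence $\hat{x}_c = \hat{x}_i$ and the target reads $d_2^T(a,c)\le (1+L)\hat{x}_i$. Write $c'$ for the other of the two points $h_i,l_i$, so that $\{c,c'\}=\{h_i,l_i\}$. The crucial observation is that $(h_i,l_i)$ is an edge of $T$, so $d_2^T(c',c)\le d_2(c',c)$, and therefore the triangle inequality for the shortest-path metric gives $d_2^T(a,c)\le d_2^T(a,c')+d_2(c,c')$, equivalently $d_2^T(a,c')\ge d_2^T(a,c)-d_2(c,c')$. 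Substituting this lower bound for $d_2^T(a,c')$ into the potential inequality, which with $\{c,c'\}=\{h_i,l_i\}$ reads $d_2^T(a,c)+d_2^T(a,c')+d_{\hat{P}_i}(h_i,l_i)\le (2+2L)\hat{x}_i$, yields
$$2\,d_2^T(a,c)-d_2(c,c')+d_{\hat{P}_i}(h_i,l_i)\le (2+2L)\hat{x}_i.$$
Finally, since $d_{\hat{P}_i}(h_i,l_i)$ is a path along $\partial\hat{P}_i$ joining the two boundary points $h_i$ and $l_i$ (that is, $c$ and $c'$), it is at least the straight-line distance, so $d_2(c,c')\le d_{\hat{P}_i}(h_i,l_i)$. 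Dropping the resulting nonpositive term $d_2(c,c')-d_{\hat{P}_i}(h_i,l_i)$ leaves $2\,d_2^T(a,c)\le (2+2L)\hat{x}_i$, i.e. $d_2^T(a,c)\le (1+L)\hat{x}_i=(1+L)\hat{x}_c$.

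I do not anticipate a genuine obstacle: the argument is a short algebraic manipulation of the potential inequality, and the two facts it rests on are immediate from the definitions. The only things to check are (a) that $(h_i,l_i)$ is indeed an edge of $T$, which is how $h_i,l_i$ were defined, so that $d_2^T(c',c)\le d_2(c',c)$ is legitimate, and (b) that the hypothesis ``$c$ on the E side'' is exactly what permits replacing $\hat{x}_c$ by $\hat{x}_i$. I would also emphasize that the manipulation is symmetric in $h_i$ and $l_i$, so it covers both $c=h_i$ and $c=l_i$ with no separate case, and that it never invokes the value of $\hat{x}_{c'}$ nor the gentleness of the edge $(h_i,l_i)$.
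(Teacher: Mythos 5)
Your proof is correct and rests on exactly the same ingredients as the paper's: the potential inequality, the fact that $(h_i,l_i)$ is an edge of $T$ so the triangle inequality applies, and the observation that the boundary path $d_{\hat{P}_i}(h_i,l_i)$ dominates the chord $d_2(h_i,l_i)$. The only cosmetic difference is that the paper splits into two cases (at least one of $d_2^T(a,h_i)$ and $d_2^T(a,l_i)+d_{\hat{P}_i}(h_i,l_i)$ is at most $(1+L)\hat{x}_i$) whereas you fold the same triangle inequality directly into the potential inequality and divide by two, which is an equivalent and slightly tidier packaging.
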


     \begin{proof}
     Without loss of generality, assume $c=h_i$.
     Observe that, since $h_i$ is eastern, we have
     \begin{align}
     \label{eq:lem5}
     \hat{x}_c=\hat{x}_{h_i}=\hat{x}_i .
     \end{align}
     
     Moreover, since $\hat{P}_{i}$ has a potential, we have (1)
     $d_2^T(a,h_i)\leq (1+L)\hat{x}_{i}=(1+L)\hat{x}_{h_i}$ or (2) $d_2^T(a,l_i)+d_{\hat{P}_{i}}(h_i,l_i) \leq (1+L)\hat{x}_{i}$.

     In the first case, we find
     $$d_2^T(a,h_i)\leq (1+L)\hat{x}_{i}=(1+L)\hat{x}_c$$
     by~\eqref{eq:lem5}.
     In the second case, since $(l_i,h_i)$ is an edge in $T$, by the triangle inequality we get
\begin{align*}
    d_2^T(a,h_i)\leq d_2^T(a,l_i)+d_2(l_i,h_i)\leq d_2^T(a,l_i)+d_{\hat{P}_{i}}(h_i,l_i)&\leq (1+L)\hat{x}_{i}\\
    &=(1+L)\hat{x}_c
\end{align*}
by~\eqref{eq:lem5}.
\end{proof}

\begin{definition}~\label{def:def6}
 Let $1\leq j \leq k$. The \emph{maximal high path ending at $h_j$} and the \emph{maximal low path ending at $l_j$} are defined as follows:\\
 If $h_j$ is eastern in $\hat{P}_{j}$, the maximal high path ending at $h_j$ is simply $h_j$; otherwise, it is the path $h_i, h_{i+1},...,h_j$ such that $h_{i+1},...,h_j$ are not eastern in respectively, $\hat{P}_{i+1},..,\hat{P}_{j}$ and either $i=0$ or $h_i$ is eastern in $\hat{P}_{i}$.\\
 If $l_j$ is eastern in $\hat{P}_{j}$, the maximal low path ending at $l_j$ is simply $l_j$; otherwise, it is the path $l_i, l_{i+1},...,l_j$ such that $l_{i+1},...,l_j$ are not eastern in respectively, $\hat{P}_{i+1},..,\hat{P}_{j}$ and either $i=0$ or $l_i$ is eastern in $\hat{P}_{i}$.
\end{definition}

The following lemma gives an upper bound on the length of maximal high and low paths.

\begin{lemma}~\label{lem:lem7}
    Suppose $h_i,h_{i+1},...,h_j$ is a maximal high path. Then we have $$d_2^T(h_i,h_j)\leq (\hat{x}_{h_j}-\hat{x}_{h_i})+(\hat{y}_{h_j}-\hat{y}_{h_i}).$$ On the other hand, the following inequality holds for a maximal low path $l_i,l_{i+1},...,l_j$:  $$d_2^T(l_i,l_j)\leq (\hat{x}_{l_j}-\hat{x}_{l_i})+(\hat{y}_{l_i}-\hat{y}_{l_j}).$$ 
\end{lemma}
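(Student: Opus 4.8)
The plan is to reduce the statement to three ingredients and then combine them by a telescoping sum. The three ingredients are: (i) consecutive distinct high points are joined by an edge of $T$; (ii) every edge $(u,v)$ satisfies $d_2(u,v)\le|\hat{x}_v-\hat{x}_u|+|\hat{y}_v-\hat{y}_u|$ in the $\{\hat{x},\hat{y}\}$ basis; and (iii) the maximal high path is monotone, i.e.\ $\hat{x}$ and $\hat{y}$ are both non-decreasing along $h_i,h_{i+1},\dots,h_j$. Granting these, the bound is immediate: since $d_2^T$ is a shortest-path metric and $h_i,\dots,h_j$ is a path in $T$, we have $d_2^T(h_i,h_j)\le\sum_{m=i}^{j-1}d_2(h_m,h_{m+1})$; bounding each term by (ii) and using (iii) to drop the absolute values, the sum telescopes to $(\hat{x}_{h_j}-\hat{x}_{h_i})+(\hat{y}_{h_j}-\hat{y}_{h_i})$, which is exactly the claimed bound.

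Ingredient (ii) holds for any edge, regardless of the path. Writing $\Delta\hat{x}=\hat{x}_v-\hat{x}_u$ and $\Delta\hat{y}=\hat{y}_v-\hat{y}_u$, the norm identity stated just before the triangle analysis gives $d_2(u,v)=\sqrt{\Delta\hat{x}^2+\Delta\hat{y}^2+2\,\Delta\hat{x}\,\Delta\hat{y}\cos\theta}\le\sqrt{\Delta\hat{x}^2+\Delta\hat{y}^2+2|\Delta\hat{x}|\,|\Delta\hat{y}|}=|\Delta\hat{x}|+|\Delta\hat{y}|$, using $\Delta\hat{x}\,\Delta\hat{y}\cos\theta\le|\Delta\hat{x}|\,|\Delta\hat{y}|$. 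Ingredient (i) follows from the structure recorded in the proof of \lemref{lem4}: when $h_{m+1}\ne h_m$ the triangles $T_m$ and $T_{m+1}$ share the edge $(h_m,l_m)$ and $T_{m+1}=\triangle h_m h_{m+1}l_m$, so $(h_m,h_{m+1})$ is an edge of $T$; when $h_{m+1}=h_m$ the corresponding term vanishes.

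The heart of the argument, and the step I expect to be the main obstacle, is the monotonicity (iii). I would prove it one transition at a time, for each genuinely new high point $h_{m+1}$ (with $i<m+1\le j$). By the placement established in \lemref{lem4}, $h_{m+1}$ lies on the boundary arc of $\hat{P}_{m+1}$ traversed clockwise from $h_m$ to $l_m$, and by the definition of a maximal high path $h_{m+1}$ is not eastern, hence lies on the N or W side of $\hat{P}_{m+1}$. A short case analysis then shows $h_m$ itself must lie on the N or W side of $\hat{P}_{m+1}$: it cannot lie on the E side, because travelling clockwise from the E side first descends through eastern points and then below $ab$, so it can never reach a non-eastern high point before $l_m$; and it cannot lie on the S side, since $h_m$ is above $ab$ while general position forbids $h_m$ and $l_m$ from sharing a side-line, leaving no room for $l_m$ below $ab$. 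Finally, starting from a point on the N or W side, the clockwise arc reaches any other point of the N or W side only by increasing $\hat{x}$ (along the N side) or increasing $\hat{y}$ (along the W side), so both increments $\hat{x}_{h_{m+1}}-\hat{x}_{h_m}$ and $\hat{y}_{h_{m+1}}-\hat{y}_{h_m}$ are non-negative. The non-eastern hypothesis is exactly what keeps the arc off the E and S sides, where $\hat{y}$ would decrease; this is where maximality of the path is essential.

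The low-path statement is the mirror image: the new low point $l_{m+1}$ lies on the corresponding boundary arc from $l_m$ to $h_m$ and, being non-eastern, on the S or W side of $\hat{P}_{m+1}$, so the same analysis yields $\hat{x}$ non-decreasing and $\hat{y}$ non-increasing along $l_i,\dots,l_j$. Telescoping as before gives $d_2^T(l_i,l_j)\le(\hat{x}_{l_j}-\hat{x}_{l_i})+(\hat{y}_{l_i}-\hat{y}_{l_j})$. I would write out the high-path case in full and obtain the low-path case by symmetry.
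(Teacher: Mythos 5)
Your proposal is correct and follows essentially the same route as the paper: bound each edge of the path by its $L_1$-length in the $\{\hat{x},\hat{y}\}$ basis (the triangle inequality for the norm), establish that the path is monotone in both $\hat{x}$ and $\hat{y}$, and telescope. The only difference is one of detail: the paper dispatches the monotonicity step with the single assertion that the non-eastern points force ``a succession of WN edges (by the general position assumption),'' whereas you spell out the side-by-side case analysis behind that assertion; your elaboration is consistent with the paper's argument.
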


\begin{proof}
     Although $h_i$ may be E, we note that the remaining vertices of the maximal high path $h_{i+1},..,h_j$ cannot be E. As such we have a succession of WN edges (by the general position assumption). As such, we have that $\hat{y}_{h_i} < \hat{y}_{h_{i+1}} < ... < \hat{y}_{h_j}$. By the triangle inequality, we have the following for all $i\leq k\leq j$: $$d_2^T(h_k,h_{k+1})\leq (\hat{x}_{h_{k+1}}-\hat{x}_{h_{k}})+(\hat{y}_{h_{k+1}}-\hat{y}_{h_k}) .$$ We obtain
    \begin{align*}
        d_2^T(h_i,h_j)\leq \sum_{k=i}^j d_2^T(h_k,h_{k+1}) \leq \sum_{k=i}^j  (\hat{x}_{h_{k+1}}-\hat{x}_{h_{k}})+(\hat{y}_{h_{k+1}}-\hat{y}_{h_k}) \leq (\hat{x}_{h_j}-\hat{x}_{h_i})+(\hat{y}_{h_j}-\hat{y}_{h_i}).
    \end{align*}
    The bound on the length of the maximal low path can be shown using a symmetrical argument
\end{proof}

Next we will prove the Crossing Lemma, which yields bounds on the distance in the triangulation from $a$ to the first inductive point.
\begin{lemma}\label{lem:crossing}
    (Crossing Lemma) Assume $P(a,b)$ does not contain any other vertices of $\mathcal{P}$ and $(a,b)$ is not an edge in the parallelogram Delaunay graph.  Then in each case, we have the following: 
    \begin{enumerate}
        \item[(1)]~\label{eq:crlem1}If no parallelogram in $\hat{P}_{1},\hat{P}_{2},...,\hat{P}_{k}$ is inductive then $$d_2^T(a,b)\leq \left(L+\sqrt{1+L^2+2L|\cos(\theta)|}\right)\hat{x}_b+\hat{y}_b$$
        \item[(2)] Otherwise, let $\hat{P}_{j}$ be the first inductive parallelogram in the sequence $\hat{P}_{1},\hat{P}_{2},...,\hat{P}_{k-1}$.
        \begin{enumerate}
            \item[(a)]~\label{eq:crlem2.a} If $L=A$ (as in Scenarios 2 and 3) and $h_j$ is the inductive point of $\hat{P}_{j}$ then, $$d_2^T(a,h_j)+(\hat{y}_{h_j}-\hat{y}_b) \leq \left(A+\sqrt{1+A^2+2A|\cos(\theta)|}\right)\hat{x}_{h_j}.$$
            \item[(b)] If $L=A$ (as in Scenarios 2 and 3) and $l_j$ is the inductive point of $\hat{P}_{j}$ then, $$ d_2^T(a,l_j)-\hat{y}_{l_j} \leq \left(A+\sqrt{1+A^2+2A|\cos(\theta)|}\right)\hat{x}_{l_j}.$$
            \item[(c)] If $L=1/A$ (as in Scenarios 1 and 4) and $h_j$ is the inductive point of $\hat{P}_{j}$ then, $$d_2^T(a,h_j)+A(\hat{y}_{h_j}-\hat{y}_b) \leq \left(1+\sqrt{1+\frac{1}{A^2}+\frac{2|\cos(\theta)|}{A}}\right)\hat{x}_{h_j}.$$
            \item[(d)] If $L=1/A$ (as in Scenarios 1 and 4) and $l_j$ is the inductive point of $\hat{P}_{j}$ then, $$ d_2^T(a,l_j)-A\hat{y}_{l_j} \leq \left(1+\sqrt{1+\frac{1}{A^2}+\frac{2|\cos(\theta)|}{A}}\right)\hat{x}_{l_j}.$$
        \end{enumerate}
    \end{enumerate}
\end{lemma}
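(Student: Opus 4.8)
The plan is to combine the three preparatory lemmas with a single elementary length estimate for gentle edges, assembling them through a case analysis that follows the statement. Two global facts hold throughout. First, since in every case $\hat{P}_1$ has a potential and $\hat{P}_1,\dots,\hat{P}_{j-1}$ (respectively $\hat{P}_1,\dots,\hat{P}_{k-1}$ in part (1)) are not inductive, iterating \lemref{lem4} shows that all of $\hat{P}_1,\dots,\hat{P}_j$ (respectively all $\hat{P}_i$) have a potential; in particular \lemref{lem5} is available at every eastern point along the way. Second, if $(u,v)$ is gentle then, writing $\Delta=|\hat{x}_v-\hat{x}_u|$ and using the norm inequality recorded above together with $|\hat{y}_v-\hat{y}_u|\le L\Delta$,
$$d_2(u,v)\le\sqrt{(\hat{x}_v-\hat{x}_u)^2+(\hat{y}_v-\hat{y}_u)^2+2|\hat{x}_v-\hat{x}_u|\,|\hat{y}_v-\hat{y}_u|\,|\cos(\theta)|}\le\Delta\,\rho,$$
where $\rho:=\sqrt{1+L^2+2L|\cos(\theta)|}$ is exactly the coefficient appearing in the statement. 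Thus a gentle edge is never longer than $\rho$ times its $\hat{x}$-width.

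For part (2), let $c$ be the inductive point of the first inductive parallelogram $\hat{P}_j$, so the edge $(h_j,l_j)$ is gentle and $c$ is its endpoint of larger $\hat{x}$-coordinate. The key idea is to reach $c$ \emph{across} this single gentle edge rather than climbing to it directly: when $c=h_j$ (parts (a) and (c)) I would bound $d_2^T(a,h_j)\le d_2^T(a,l_j)+d_2(l_j,h_j)$, estimate the gentle edge by $\rho(\hat{x}_{h_j}-\hat{x}_{l_j})$ as above, and bound $d_2^T(a,l_j)$ along the \emph{maximal low path} ending at $l_j$, whose length is controlled by \lemref{lem7} while \lemref{lem5} controls the distance from $a$ to the eastern point where that maximal path begins (or the path begins at $a$ itself). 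Reaching $h_j$ directly through the maximal high path would instead accumulate $\hat{y}_{h_j}$ twice once the term $\hat{y}_{h_j}-\hat{y}_b$ is added, which is why crossing the gentle edge from below is essential. The symmetric choice, reaching $l_j$ from above, handles parts (b) and (d).

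It then remains to simplify. Substituting the \lemref{lem5} and \lemref{lem7} bounds, the inductive point's $\hat{y}$-coordinate is eliminated against the added height term using the gentle constraint $|\hat{y}_{h_j}-\hat{y}_{l_j}|\le L|\hat{x}_{h_j}-\hat{x}_{l_j}|$ (scaled by $A$ in parts (c),(d), which is what produces the factor $A$ multiplying the height and the leading coefficient $1+\rho$ rather than $L+\rho$ there), after which the remaining terms collapse to the claimed right-hand side using $\hat{x}_{l_j}\ge 0$, $\rho\ge 1$, and $\hat{y}_b\ge 0$. For part (1) there is no gentle edge to cross: all $\hat{P}_i$ have a potential and $b=h_k=l_k$ is eastern in $\hat{P}_k$, so the potential of $\hat{P}_k$ yields $d_2^T(a,b)\le(1+L)\hat{x}_b$ directly (the degenerate form of \lemref{lem5}); since $\rho\ge 1$ and $\hat{y}_b\ge 0$ this is dominated by $(L+\rho)\hat{x}_b+\hat{y}_b$, and the weaker stated form is precisely what the subsequent global argument needs.

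The main obstacle is the sign bookkeeping inside the four sub-cases. The useful approach direction (from below for $h_j$, from above for $l_j$) must be paired with the correct facts about the sign of $\hat{y}$ at high and low points, and these signs are scenario dependent: for example the high points may have negative $\hat{y}$-coordinate in Scenario 2, where $\hat{y}=(0,-1)$. One must also handle the case where the relevant maximal path starts at a genuine eastern point $h_m$ (or $l_m$) with $m>0$ rather than at $a$, carrying the extra $L\hat{x}_{h_m}-\hat{y}_{h_m}$ term and showing it is nonpositive up to the available slack. Verifying that all of these estimates close simultaneously, in both regimes $L=A$ and $L=1/A$, is the delicate part; everything else is an assembly of \lemref{lem4}, \lemref{lem5}, \lemref{lem7}, and the gentle-edge estimate.
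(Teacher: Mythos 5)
Your proposal follows the paper's proof essentially step for step: iterate Lemma~\ref{lem:lem4} to propagate the potential to $\hat{P}_j$, bound the gentle edge by $\sqrt{1+L^2+2L|\cos(\theta)|}$ times its $\hat{x}$-width, and reach the inductive point by crossing that edge from the opposite side via the maximal low (resp.\ high) path, controlled by Lemma~\ref{lem:lem7} and, at its eastern starting point, by Lemma~\ref{lem:lem5}, with part (1) falling out of the potential of $\hat{P}_k$ directly. The sign bookkeeping and the role of the factor $A$ in parts (c) and (d) that you flag as the delicate step are exactly the computations the paper carries out, so the plan is correct and matches the paper's argument.
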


\begin{proof}
\begin{enumerate}
\item[(1)] 
  Suppose $P(a,b)$ is empty and there is no inductive parallelogram in the sequence $\hat{P}_{1},\hat{P}_{2},...,\hat{P}_{k}$. Then by Lemma~\ref{lem:lem4}, $\hat{P}_{k}$ must have a potential since $\hat{P}_{1}$ has a potential. Next, $b$ must lie on the E side of $\hat{P}_{k}$ by the general position assumption. Lemma~\ref{lem:lem5} gives the following bound:
    \begin{align*}
        d_2^T(a,b)\leq (1+L)\hat{x}_k=(1+L)\hat{x}_b \leq \left(L+\sqrt{1+L^2+2L|\cos(\theta)|}\right)\hat{x}_b+\hat{y}_b .
    \end{align*}
\item[(2)]
\begin{enumerate}
\item[(a)] In this case, we are either in Scenario 2 or 3.

  Suppose $\hat{P}_{j}$ is the first inductive parallelogram in $\hat{P}_{1},\hat{P}_{2},...,\hat{P}_{k-1}$  and we assume that the inductive point of $\hat{P}_{j}$ is $c=h_j$.
     
    By Lemma ~\ref{lem:lem4}, every parallelogram $\hat{P}_{i}$ with $i\leq j$ has a potential. Note that in this case, $l_j$ is to the left of $h_j$.
    Since the edge $(l_j,h_j)$ is gentle it follows that the vertical distance between $l_j$ and $h_j$
    can be bounded:
    \begin{align}~\label{eq:lem8_2a1}
    (\hat{y}_{h_j}-\hat{y}_{l_j})  \leq A(\hat{x}_{h_j}-\hat{x}_{l_j}) \iff (\hat{y}_{h_j}-\hat{y}_{l_j})+A \hat{x}_{l_j}\leq A \hat{x}_{h_j}.
\end{align}

Moreover, we can bound the length of the edge $(l_j,h_j)$ by
\begin{align}~\label{eq:lem8_2a2}
    \begin{split}
        d_2(l_j,h_j)&\leq \sqrt{1+A^2-2A\cos(\pi-\theta)}(\hat{x}_{h_j}-\hat{x}_{l_j})\\
        &\leq \sqrt{1+A^2+2A|\cos(\theta)|}(\hat{x}_{h_j}-\hat{x}_{l_j}).
    \end{split}
\end{align}

Let $l_i,l_{i+1},...,l_{j-1}=l_j$ be the maximal low path ending at $l_j$. Note that by Lemma~\ref{lem:lem7} we have
\begin{align}~\label{eq:lem8_2a3}
    d_2^T(l_i,l_j)&\leq (\hat{x}_{l_j}-\hat{x}_{l_i})+(\hat{y}_{l_i}-\hat{y}_{l_j}) .
\end{align}

Next, note that either $l_i=l_0=a$ or $l_i$ is an eastern point in $\hat{P}_{i}$ that has a potential and Lemma~\ref{lem:lem5} applies. In this case we have that since $l_i$ is eastern in $\hat{P}_{i}$ we have:
\begin{align}~\label{eq:lem8_2a4}
    \begin{split}
       d_2^T(a,l_i)
       &\leq (1+A)\hat{x}_{i}\\
       &= (1+A)\hat{x}_{l_i} 
    \end{split}
\end{align}

Using inequalities ~\eqref{eq:lem8_2a1},~\eqref{eq:lem8_2a2},~\eqref{eq:lem8_2a3} and~\eqref{eq:lem8_2a4}, together with the triangle inequality, we get
\begin{align*}
    &\ d_2^T(a,h_j)+(\hat{y}_{h_j}-\hat{y}_b)\\
    \leq&\ d_2^T(a,l_i)+d_2^T(l_i,l_j)+d_2(l_j,h_j)+(\hat{y}_{h_j}-\hat{y}_b)\\
    \leq&\ (1+A)\hat{x}_{l_i} +(\hat{x}_{l_j}-\hat{x}_{l_i})+(\hat{y}_{l_i}-\hat{y}_{l_j})   + \sqrt{1+A^2+2A|\cos(\theta)|}(\hat{x}_{h_j}-\hat{x}_{l_j})+ \hat{y}_{h_j}-\hat{y}_b\\
    =&\ A\hat{x}_{l_i}+\sqrt{1+A^2+2A|\cos(\theta)|}\hat{x}_{h_j}+(\hat{y}_{l_i}-\hat{y}_{l_j}) -\left(1+\sqrt{1+A^2+2A|\cos(\theta)|}\right)\hat{x}_{l_j}  +\hat{y}_{h_j}-\hat{y}_b\\
    \leq&\ A\hat{x}_{l_i}+ \sqrt{1+A^2+2A|\cos(\theta)|}\hat{x}_{h_j}+ \hat{y}_{h_j}-\hat{y}_{l_j}\\
     &\ \phantom{.}\hskip 4cm \text{since $\hat{y}_{l_i}-\hat{y}_b<0$ and $-\hat{x}_{l_j}<0$}\\
    \leq&\ \sqrt{1+A^2+2A|\cos(\theta)|}\hat{x}_{h_j} + A\hat{x}_{l_j}+\hat{y}_{h_j}-\hat{y}_{l_j}\\
    \leq&\ \sqrt{1+A^2+2A|\cos(\theta)|}\hat{x}_{h_j} + A\hat{x}_{h_j}\\
     &\ \phantom{.}\hskip 4cm \text{since edge $(l_j,h_j)$ is gentle}\\
    =&\ \left(A+\sqrt{1+A^2+2A|\cos(\theta)|}\right)\hat{x}_{h_j} .
\end{align*}

\item[(b)] In this case,  we are either in Scenario 2 or 3.

   Suppose $\hat{P}_{j}$ is the first inductive parallelogram in $\hat{P}_{1},\hat{P}_{2},...,\hat{P}_{k-1}$  and we assume that the inductive point of $\hat{P}_{j}$ is $c=l_j$.
  
    By Lemma ~\ref{lem:lem4}, every parallelogram $\hat{P}_{i}$, for $i\leq j$ has a potential. Note that in this case, $h_j$ is to the left of $l_j$. Since the edge $(l_j,h_j)$ is gentle it follows that the vertical distance between $l_j$ and $h_j$ can be bounded:
\begin{align}~\label{eq:lem8_2b1} 
    (\hat{y}_{h_j}-\hat{y}_{l_j})  \leq A(\hat{x}_{l_j}-\hat{x}_{h_j}) \iff (\hat{y}_{h_j}-\hat{y}_{l_j})+A \hat{x}_{h_j}\leq A \hat{x}_{l_j}.
\end{align}

Moreover, we can bound the length of the edge $(l_j,h_j)$ by
\begin{align}~\label{eq:lem8_2b2} 
\begin{split}
    d_2(l_j,h_j)&\leq \sqrt{1+A^2-2A\cos(\theta)}(\hat{x}_{l_j}-\hat{x}_{h_j})\\
    &\leq  \sqrt{1+A^2+2A|\cos(\theta)|}(\hat{x}_{l_j}-\hat{x}_{h_j})
\end{split}
\end{align}

Let $h_i,h_{i+1},...,h_{j-1}=h_j$ be the maximal high path ending at $h_j$. Note that by Lemma~\ref{lem:lem7} we have
\begin{align}~\label{eq:lem8_2b3} 
    d_2^T(h_i,h_j)\leq (\hat{x}_{h_j}-\hat{x}_{h_i})+(\hat{y}_{h_j}-\hat{y}_{h_i})
\end{align}

Next, note that either $h_i=h_0=a$ or $h_i$ is an eastern point in $\hat{P}_{i}$ that has a potential and Lemma~\ref{lem:lem5} applies. In this case we have that since $h_i$ is eastern in $\hat{P}_{i}$ we have:
\begin{align}~\label{eq:lem8_2b4} 
    \begin{split}
       d_2^T(a,h_i)
       &\leq (1+A)\hat{x}_{i}\\
       &= (1+A)\hat{x}_{h_i} 
    \end{split}
\end{align}

Using inequalities~\eqref{eq:lem8_2b1},~\eqref{eq:lem8_2b2},~\eqref{eq:lem8_2b3} and~\eqref{eq:lem8_2b4}, together with the triangle inequality, we get
\begin{align*}
   &\ d_2^T(a,l_j)-\hat{y}_{l_j}\\
   \leq &\ d_2^T(a,h_i)+d_2^T(h_i,h_j)+d_2(h_j,l_j)-\hat{y}_{l_j}\\
    \leq &\ (1+A)\hat{x}_{h_i} + (\hat{x}_{h_j}-\hat{x}_{h_i})+(\hat{y}_{h_j}-\hat{y}_{h_i})  + \sqrt{1+A^2+2A|\cos(\theta)|}(\hat{x}_{l_j}-\hat{x}_{h_j})-\hat{y}_{l_j}\\
    = &\ A\hat{x}_{h_i}+\hat{x}_{h_j}+\sqrt{1+A^2+2A|\cos(\theta)|}(\hat{x}_{l_j}-\hat{x}_{h_j})+\hat{y}_{h_j}-\hat{y}_{l_j}  -\hat{y}_{h_i}\\
    = &\ \sqrt{1+A^2+2A|\cos(\theta)|}\hat{x}_{l_j}+A\hat{x}_{h_i}+\hat{y}_{h_j} -\hat{y}_{l_j}  -\left(\sqrt{1+A^2+2A|\cos(\theta)|}-1\right)\hat{x}_{h_j}-\hat{y}_{h_i}\\
    \leq &\ \sqrt{1+A^2+2A|\cos(\theta)|}\hat{x}_{l_j} +A\hat{x}_{h_i}+ \hat{y}_{h_j}-\hat{y}_{l_j}\\
     &\ \phantom{.}\hskip 4cm \text{since both $-\hat{y}_{h_i}<0$ and $-\hat{x}_{h_j}<0$}\\
    \leq &\ \sqrt{1+A^2+2A|\cos(\theta)|}\hat{x}_{l_j} +A\hat{x}_{h_j}+ \hat{y}_{h_j}-\hat{y}_{l_j}\\
     &\ \phantom{.}\hskip 4cm \text{since $\hat{x}_{h_i}\leq \hat{x}_{h_j}$}\\
     \leq &\ \sqrt{1+A^2+2A|\cos(\theta)|}\hat{x}_{l_j} +A\hat{x}_{l_j}\\
    &\ \phantom{.}\hskip 4cm \text{since edge  $(h_j,l_j)$ is gentle}\\
    \leq &\ \left(A+\sqrt{1+A^2+2A|\cos(\theta)|}\right)\hat{x}_{l_j}.\\
\end{align*}

\item[(c)] In this case, we are either in Scenario 1 or 4.

     Suppose $\hat{P}_{j}$ is the first inductive parallelogram in $\hat{P}_{1},\hat{P}_{2},...,\hat{P}_{k-1}$  and we assume that the inductive point of $\hat{P}_{j}$ is $c=h_j$.

By Lemma ~\ref{lem:lem4}, every parallelogram $\hat{P}_{i}$, for $i\leq j$ has a potential. Note that in this case, $l_j$ is to the left of $h_j$. Since the edge $(l_j,h_j)$ is gentle it follows that the vertical distance between $l_j$ and $h_j$ can be bounded:
\begin{align}~\label{eq:lem8_2c1}
    (\hat{y}_{h_j}-\hat{y}_{l_j})  \leq \frac{1}{A}(\hat{x}_{h_j}-\hat{x}_{l_j}) \iff A(\hat{y}_{h_j}-\hat{y}_{l_j}) + \hat{x}_{l_j}\leq \hat{x}_{h_j}.
\end{align}

Moreover, we can bound the length of the edge $(l_j,h_j)$ by
\begin{align}~\label{eq:lem8_2c2}
\begin{split}
    d_2(l_j,h_j)&\leq \sqrt{1+\frac{1}{A^2}-\frac{2\cos(\pi-\theta)}{A}}(\hat{x}_{h_j}-\hat{x}_{l_j})\\
    &\leq \sqrt{1+\frac{1}{A^2}+\frac{2|\cos(\theta)|}{A}}(\hat{x}_{h_j}-\hat{x}_{l_j}) .
\end{split} 
\end{align}
Let $l_i,l_{i+1},...,l_{j-1}=l_j$ be the maximal low path ending at $l_j$. Note that by Lemma~\ref{lem:lem7} we have
\begin{align}~\label{eq:lem8_2c3}
    d_2^T(l_i,l_j)&\leq (\hat{x}_{l_j}-\hat{x}_{l_i})+(\hat{y}_{l_i}-\hat{y}_{l_j}) .
\end{align}

Next, note that either $l_i=l_0=a$ or $l_i$ is an eastern point in $\hat{P}_{i}$ that has a potential and Lemma~\ref{lem:lem5} applies. In this case we have that since $l_i$ is eastern in $\hat{P}_{i}$ we have
\begin{align}~\label{eq:lem8_2c4}
    \begin{split}
       d_2^T(a,l_i)
       &\leq (1+1/A)\hat{x}_{i}\\
       &= (1+1/A)\hat{x}_{l_i} .
    \end{split} 
\end{align}

Using inequalities~\eqref{eq:lem8_2c1}, ~\eqref{eq:lem8_2c2},  ~\eqref{eq:lem8_2c3} and  ~\eqref{eq:lem8_2c4}, together with the triangle inequality, we get
\begin{align*}
    &\ d_2^T(a,h_j)+A(\hat{y}_{h_j}-\hat{y}_b)\\
    \leq &\ d_2^T(a,l_i)+d_2^T(l_i,l_j)+d_2(l_j,h_j)+A(\hat{y}_{h_j}-\hat{y}_b)\\
    \leq&\ (1+1/A)\hat{x}_{l_i} +(\hat{x}_{l_j}-\hat{x}_{l_i})+(\hat{y}_{l_i}-\hat{y}_{l_j}) +\sqrt{1+\frac{1}{A^2}+\frac{2|\cos(\theta)|}{A}}(\hat{x}_{h_j}-\hat{x}_{l_j})
      + A(\hat{y}_{h_j}-\hat{y}_b)\\
    = &\ 1/A \hat{x}_{l_i} +\hat{x}_{l_j}+\sqrt{1+\frac{1}{A^2}+\frac{2|\cos(\theta)|}{A}}(\hat{x}_{h_j}-\hat{x}_{l_j})  +(\hat{y}_{l_i}-\hat{y}_{l_j})
     +A(\hat{y}_{h_j}-\hat{y}_b)\\
    \leq&\ 1/A   \hat{x}_{l_i}+\hat{x}_{l_j}+\sqrt{1+\frac{1}{A^2}+\frac{2|\cos(\theta)|}{A}}(\hat{x}_{h_j}-\hat{x}_{l_j})  +A(\hat{y}_{l_i}-\hat{y}_{l_j})+A(\hat{y}_{h_j}-\hat{y}_b)\\
    &\ \phantom{.}\hskip 5cm \text{since   $A>1$}\\
    = &\ 1/A   \hat{x}_{l_i}+\sqrt{1+\frac{1}{A^2}+\frac{2|\cos(\theta)|}{A}}\hat{x}_{h_j}+ A(\hat{y}_{l_i}-\hat{y}_{l_j}) +A(\hat{y}_{h_j}-\hat{y}_b)\\ \displaybreak
    &\
     -(1+\sqrt{1+\frac{1}{A^2}+\frac{2|\cos(\theta)|}{A}})\hat{x}_{l_j}\\ 
    \leq&\ 1/A   \hat{x}_{l_i}+\sqrt{1+\frac{1}{A^2}+\frac{2|\cos(\theta)|}{A}}\hat{x}_{h_j}+A(\hat{y}_{l_i}-\hat{y}_{l_j}) +A(\hat{y}_{h_j}-\hat{y}_b)\\
    &\ \phantom{.}\hskip 5cm \text{since   $-\hat{x}_{l_j}<0$}\\
    = &\ 1/A   \hat{x}_{l_i}+\sqrt{1+\frac{1}{A^2}+\frac{2|\cos(\theta)|}{A}}\hat{x}_{h_j}+A(\hat{y}_{h_j}-\hat{y}_{l_j}) +A(\hat{y}_{l_i}-\hat{y}_b)\\
    \leq &\ \sqrt{1+\frac{1}{A^2}+\frac{2|\cos(\theta)|}{A}}\hat{x}_{h_j}+1/A   \hat{x}_{l_i}+A(\hat{y}_{h_j}-\hat{y}_{l_j})\\ 
    &\ \phantom{.}\hskip 5cm \text{since $(\hat{y}_{l_i}-\hat{y}_b)<0$}\\
    \leq&\ \sqrt{1+\frac{1}{A^2}+\frac{2|\cos(\theta)|}{A}}\hat{x}_{h_j}+1/A   \hat{x}_{l_j}+A(\hat{y}_{h_j}-\hat{y}_{l_j})\\
    &\ \phantom{.}\hskip 5cm \text{since $\hat{x}_{l_i}\leq\hat{x}_{l_j}$}\\
    \leq&\ \sqrt{1+\frac{1}{A^2}+\frac{2|\cos(\theta)|}{A}}\hat{x}_{h_j}+ \hat{x}_{l_j}+A(\hat{y}_{h_j}-\hat{y}_{l_j})\\
    &\ \phantom{.}\hskip 5cm \text{since $A>1$}\\
    \leq&\ \left(\sqrt{1+\frac{1}{A^2}+\frac{2|\cos(\theta)|}{A}}\right)\hat{x}_{h_j}+\hat{x}_{h_j}\\ 
    &\ \phantom{.}\hskip 5cm \text{since edge $(l_j,h_j)$ is gentle}\\
    \leq&\ \left(1+ \sqrt{1+\frac{1}{A^2}+\frac{2|\cos(\theta)|}{A}}\right)\hat{x}_{h_j} .
\end{align*}

\item[(d)] In this case, we are either in Scenario 1 or 4.

   Suppose $\hat{P}_{j}$ is the first inductive parallelogram in $\hat{P}_{1},\hat{P}_{2},...,\hat{P}_{k-1}$  and we assume that the inductive point of $\hat{P}_{j}$ is $c=l_j$.

By Lemma ~\ref{lem:lem4}, every parallelogram $\hat{P}_{i}$, for $i\leq j$ has a potential. Note that in this case, $h_j$ is to the left of $l_j$. Since the edge $(l_j,h_j)$ is gentle it follows that the vertical distance between $l_j$ and $h_j$ can be bounded:
\begin{align}~\label{eq:lem8_2d1}
    (\hat{y}_{h_j}-\hat{y}_{l_j})  \leq \frac{1}{A}(\hat{x}_{l_j}-\hat{x}_{h_j}) \iff A(\hat{y}_{h_j}-\hat{y}_{l_j}) + \hat{x}_{h_j}\leq \hat{x}_{l_j}.
\end{align}

Moreover, we can bound the length of the edge $(l_j,h_j)$ by
\begin{align}~\label{eq:lem8_2d2}
    d_2(l_j,h_j)\leq \sqrt{1+\frac{1}{A^2}+\frac{2|\cos(\theta)|}{A}}(\hat{x}_{l_j}-\hat{x}_{h_j}) .
\end{align}

Let $h_i,h_{i+1},...,h_{j-1}=h_j$ be the maximal low path ending at $h_j$. Note that by Lemma~\ref{lem:lem7} we have
\begin{align}~\label{eq:lem8_2d3}
    d_2^T(h_i,h_j)&\leq (\hat{x}_{h_j}-\hat{x}_{h_i})+(\hat{y}_{h_j}-\hat{y}_{h_i}) .
\end{align}

Next, note that either $h_i=h_0=a$ or $h_i$ is an eastern point in $\hat{P}_{i}$ that has a potential and Lemma~\ref{lem:lem5} applies. In this case we have that since $h_i$ is eastern in $\hat{P}_{i}$ we have
\begin{align}~\label{eq:lem8_2d4}
    \begin{split}
       d_2^T(a,h_i)
       &\leq (1+1/A)\hat{x}_{i}\\
       &= (1+1/A)\hat{x}_{h_i} .
    \end{split}
\end{align}

Using inequalities~\eqref{eq:lem8_2d1},~\eqref{eq:lem8_2d2},~\eqref{eq:lem8_2d3} and~\eqref{eq:lem8_2d4}, together with the triangle inequality, we get
\begin{align*}
     &\ d_2^T(a,l_j)-A\hat{y}_{l_j}\\
     \leq&\ d_2^T(a,h_i)+d_2^T(h_i,h_j)+d_2(h_j,l_j)-A\hat{y}_{l_j}\\
    \leq&\ (1+1/A)\hat{x}_{h_i} + (\hat{x}_{h_j}-\hat{x}_{h_i})+(\hat{y}_{h_j}-\hat{y}_{h_i})  +\sqrt{1+\frac{1}{A^2}+\frac{2|\cos(\theta)|}{A}}(\hat{x}_{l_j}-\hat{x}_{h_j})-A\hat{y}_{l_j}\\
    = &\ 1/A \hat{x}_{h_i} + \hat{x}_{h_j}+\sqrt{1+\frac{1}{A^2}+\frac{2|\cos(\theta)|}{A}}(\hat{x}_{l_j}-\hat{x}_{h_j})  +(\hat{y}_{h_j}-\hat{y}_{h_i})-A\hat{y}_{l_j}\\
    \leq&\ 1/A \hat{x}_{h_i} + \hat{x}_{h_j} +\sqrt{1+\frac{1}{A^2}+\frac{2|\cos(\theta)|}{A}}(\hat{x}_{l_j}-\hat{x}_{h_j}) + A(\hat{y}_{h_j}-\hat{y}_{h_i})-A\hat{y}_{l_j}\\
     &\ \phantom{.}\hskip 6.5cm \text{since   $A>1$}\\ 
     = &\ 1/A \hat{x}_{h_i} +\sqrt{1+\frac{1}{A^2}+\frac{2|\cos(\theta)|}{A}}\hat{x}_{l_j} +A(\hat{y}_{h_j}-\hat{y}_{l_j}) -A\hat{y}_{h_i}\\
     &\ -\left(\sqrt{1+\frac{1}{A^2}+\frac{2|\cos(\theta)|}{A}}-1\right)\hat{x}_{h_j}\\ \displaybreak
    \leq&\ \sqrt{1+\frac{1}{A^2}+\frac{2|\cos(\theta)|}{A}}\hat{x}_{l_j} + 1/A \hat{x}_{h_i} +A(\hat{y}_{h_j}-\hat{y}_{l_j})\\
    &\ \phantom{.}\hskip 3.5cm \text{since both  $-\hat{y}_{h_i}<0$ and $-\hat{x}_{h_j}<0$}\\ 
    \leq&\ \sqrt{1+\frac{1}{A^2}+\frac{2|\cos(\theta)|}{A}}\hat{x}_{l_j} + 1/A \hat{x}_{h_j} +A(\hat{y}_{h_j}-\hat{y}_{l_j})\\
    &\ \phantom{.}\hskip 3.5cm \text{since   $\hat{x}_{h_i}\leq \hat{x}_{h_j} $}\\
    \leq&\ \sqrt{1+\frac{1}{A^2}+\frac{2|\cos(\theta)|}{A}}\hat{x}_{l_j} +\hat{x}_{h_j} +A(\hat{y}_{h_j}-\hat{y}_{l_j})\\  &\ \phantom{.}\hskip 3.5cm \text{since   $A>1$}\\
     \leq&\ \sqrt{1+\frac{1}{A^2}+\frac{2|\cos(\theta)|}{A}}\hat{x}_{l_j} +\hat{x}_{l_j}\\ 
     &\ \phantom{.}\hskip 3.75cm \text{since  edge $(h_j,l_j )$ is gentle}\\
    \leq&\ \left(1+\sqrt{1+\frac{1}{A^2}+\frac{2|\cos(\theta)|}{A}}\right)\hat{x}_{l_j} .
\end{align*}
\end{enumerate}
\end{enumerate}
\end{proof}

 The following lemma bounds the length of a path which consists of points whose $\hat{y}$-coordinates differ greatly from $\hat{y}_b$. Since the path is monotone in the $\hat{x}$-direction, we need to bound the path from the point of view of $\hat{y}$-direction to attain the upper bound on the spanning ratio.
\begin{lemma}\label{lem:lem9}
 Suppose that no vertices of $\mathcal{P}$ are in $P(a,b)$ and that $(a,b)$ is not an edge. Let the coordinates of the inductive point $c$ of $\hat{P}_{i}$ be such that $0<L(\hat{x}_b-\hat{x}_c) < |\hat{y}_b-\hat{y}_c|$, for some $1 < i < k$.
    \begin{enumerate}
        \item[(i)] If $c=h_i$, and thus $0<L(\hat{x}_b-\hat{x}_c) < \hat{y}_c-\hat{y}_b$, then for the smallest $j > i$ such that $L(\hat{x}_b-\hat{x}_{h_j})\geq  \hat{y}_{h_j}-\hat{y}_b\geq 0$, all edges on the path $h_i,...,h_j$ are NE edges (refer to Figures~\ref{fig:lem9paraw}(a) and~\ref{fig:lem9parat}(a)).
        \item[(ii)] If $c=l_i$, and thus $0<L(\hat{x}_b-\hat{x}_c) < \hat{y}_b-\hat{y}_c$, then for the smallest $j > i$ such that $L(\hat{x}_b-\hat{x}_{l_j})\geq  \hat{y}_{b}-\hat{y}_{l_j}\geq 0$, all edges on the path $l_i,...,l_j$ are SE edges (refer to Figures~\ref{fig:lem9paraw}(b) and~\ref{fig:lem9parat}(b)).
    \end{enumerate}
\end{lemma}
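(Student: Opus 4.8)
The plan is to prove part~(i) by induction on the index $m$ running from $i$ up to $j$, and to obtain part~(ii) by the mirror-image argument reflected across $ab$. Throughout I work in the $\{\hat{x},\hat{y}\}$ basis, where every $\hat{P}_m$ is an axis-aligned rectangle with diagonal slope $L$. The starting hypothesis $0<L(\hat{x}_b-\hat{x}_c)<\hat{y}_c-\hat{y}_b$ says exactly that $c=h_i$ lies strictly to the left of $b$ and strictly above the slope-$(-L)$ line through $b$. I package this as the invariant
$$\hat{x}_{h_m}<\hat{x}_b \quad\text{and}\quad \hat{y}_{h_m}-\hat{y}_b>L(\hat{x}_b-\hat{x}_{h_m}),$$
which I will maintain for every $m$ with $i\le m<j$. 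When this invariant first fails we have reached an index with $0\le \hat{y}_{h_m}-\hat{y}_b\le L(\hat{x}_b-\hat{x}_{h_m})$, which is precisely the defining property of the stopping index $j$, so establishing the invariant simultaneously pins down $j$.

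For the inductive step I consider only the steps where the high point actually advances, i.e.\ $h_m\neq h_{m+1}$; the steps where merely the low point changes leave the high point fixed and contribute no edge. At such a step the triangle $T_{m+1}$ has apex $h_{m+1}$ above $ab$ and is built on the shared edge $(h_m,l_m)$, so its empty rectangle $\hat{P}_{m+1}$ carries $h_m$, $l_m$ and $h_{m+1}$ on its boundary. The heart of the argument is to show that $h_m$ lies on the N side of $\hat{P}_{m+1}$ while $h_{m+1}$ lies on its E side: this is exactly the statement that $(h_m,h_{m+1})$ is an NE edge, and it immediately yields $\hat{x}_{h_{m+1}}>\hat{x}_{h_m}$ (the E side is rightmost) together with $\hat{y}_{h_{m+1}}<\hat{y}_{h_m}$ (descent toward $b$).

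To locate the two points on the correct sides I would combine three facts: that $\hat{P}_{m+1}$ contains no vertex of $\mathcal{P}$ in its interior (in particular not $b$); that the invariant places $b$ strictly below the slope-$(-L)$ line through $h_m$ and strictly to its right; and the general position assumption, which forbids two vertices from sharing a line parallel to a side of $\hat{P}$. I would then rule out the competing placements. If $h_{m+1}$ sat on the N side, then by general position $h_m$ would have to sit on the W side, making $(h_m,h_{m+1})$ an ascending WN edge; using the quantitative steepness of $h_m$ relative to $b$ one shows this configuration forces $b$ into the interior of $\hat{P}_{m+1}$, contradicting emptiness. Hence $h_{m+1}$ is eastern. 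Given that, $h_m$ cannot also be on the E side (two vertices on the same $\hat{y}$-parallel line), nor on the W or S side once the crossing structure of $(h_m,l_m)$ with $ab$ and the emptiness of $\hat{P}_{m+1}$ are taken into account; this leaves $h_m$ on the N side, as claimed. Re-testing the invariant at $h_{m+1}$ using the just-proved inequalities then either continues the induction or triggers $m+1=j$; since $\hat{x}$ strictly increases along the advancing high points, the process is finite and such a $j$ exists. Part~(ii) follows after reflecting the whole configuration across $ab$, which swaps high and low points, interchanges N$\leftrightarrow$S, and carries NE edges to SE edges.

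The main obstacle is the exclusion step in the third paragraph: the case analysis forcing each advancing edge onto the N-to-E sides. The delicate part is not the bookkeeping of sides but the use of emptiness to kill the ascending WN configuration — one must argue quantitatively, from $\hat{y}_{h_m}-\hat{y}_b>L(\hat{x}_b-\hat{x}_{h_m})$, that any such configuration drags $b$ into the empty rectangle $\hat{P}_{m+1}$. Making this exclusion hold uniformly across both regimes $L=A$ (Scenarios~2 and~3, Figure~\ref{fig:lem9parat}) and $L=1/A$ (Scenarios~1 and~4, Figure~\ref{fig:lem9paraw}) is where the real care is required.
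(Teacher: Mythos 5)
Your proposal follows essentially the same route as the paper's proof: maintain the invariant $0<L(\hat{x}_b-\hat{x}_{h_m})<\hat{y}_{h_m}-\hat{y}_b$ along the high path (guaranteed for $i\le m<j$ by minimality of $j$, with $j$ existing because $h_j=b$ qualifies), observe that the only candidate placements for the edge $(h_m,h_{m+1})$ are WN, WE or NE, and use the emptiness of $\hat{P}_{m+1}$ to exclude any placement of $h_m$ on its W side. The step you flag as the main obstacle is closed in the paper by a short half-plane argument that works verbatim for both $L=A$ and $L=1/A$: if $h_m$ lies on the W side of $\hat{P}_{m+1}$, then $b$ lies strictly to the right of that side (since $\hat{x}_{h_m}<\hat{x}_b$), strictly above the S side (since $\hat{y}_{l_{m+1}}<\hat{y}_b$), and strictly below the slope-$(-L)$ line through $h_m$ (this is exactly the invariant), which itself lies on or below the WN--SE diagonal because $h_m$ is on the W side; hence $b$ is interior to the triangle spanned by the WN, WS and SE corners and therefore to $\hat{P}_{m+1}$, contradicting emptiness.
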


\begin{figure}
   \captionsetup{justification=centering}
   \centerline{
    \includegraphics[scale=1]{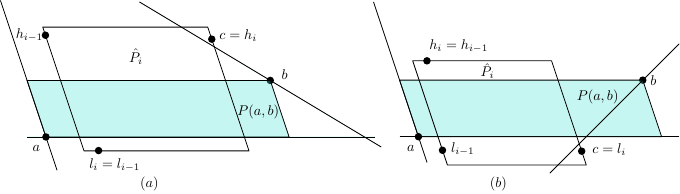}}
    \caption{Illustration of Lemma~\ref{lem:lem9} for Scenario 1. On the left $(a)$, the inductive point is $c=h_i$ and on the right $(b)$, the inductive point is $c=l_i$.}
    \label{fig:lem9paraw}
\end{figure}

\begin{figure}
   \captionsetup{justification=centering}
   \centerline{
    \includegraphics[scale=1]{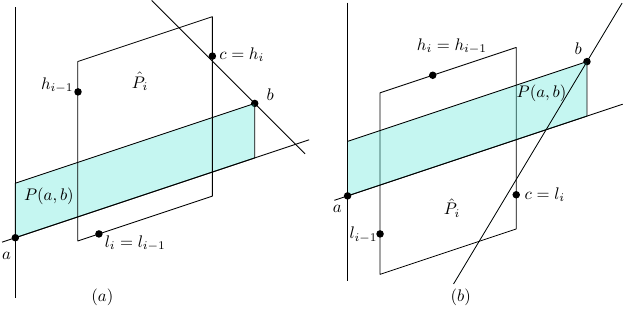}}
    \caption{Illustration of Lemma~\ref{lem:lem9} for Scenario 3. On the left $(a)$, the inductive point is $c=h_i$ and on the right $(b)$, the inductive point is $c=l_i$.}
    \label{fig:lem9parat}
\end{figure}

\begin{proof}
    We consider the case where $c=h_i$ as the case where $c=l_i$ can be shown using a similar argument. \\
    Notice that such a $j$ exists since choosing $h_j=b$ satisfies  $L(\hat{x}_b-\hat{x}_{h_j})\geq  \hat{y}_{h_j}-\hat{y}_b\geq 0$. Then for $i\leq k< j$, we have $0<L(\hat{x}_b-\hat{x}_{h_k}) < \hat{y}_{h_k}-\hat{y}_b$. Then since $\hat{x}_{h_{k+1}}> \hat{x}_{h_{k}}$ and neither vertex can be below the line $ab$, there are three options for the edge $(h_k, h_{k+1})$: WN, WE or NE in $\hat{P}_{k+1}$. Suppose for now that $h_k$ is on the W side of $\hat{P}_{k+1}$ , then we will show that $b$ must be inside $\hat{P}_{k+1}$, contradicting the fact that $\hat{P}_{k+1}$ must be empty.  By construction we know that $l_{k+1}$ must also be on the boundary of $\hat{P}_{k+1}$. Notice that the triangle formed by the WN, WS, SE corners of $\hat{P}_{k+1}$ can be expressed as the intersection of three half planes. We have that $b$ is on the right side of the half plane given by the W side of $\hat{P}_{k+1}$, since $\hat{x}_{h_k} < \hat{x}_b$. Next, we also have that $b$ is above the S side of $\hat{P}_{k+1}$, since $\hat{y}_{l_{k+1}}<\hat{y}_b$. Finally, by assumption we have that $\hat{y}_{h_k} > L(\hat{x}_b-\hat{x}_{h_k})+\hat{y}_b$ which is below the diagonal of $\hat{P}_{k+1}$ as $h_{k}$ is on the W side, hence $b$ is below the diagonal given by the WN and SE corners. Since $b$ is contained in the three half planes, then $b$ is in the interior of $\hat{P}_{k+1}$. As such, we have that $h_k$ cannot be on the W side and therefore, the edge $(h_k,h_{k+1})$ is NE. 
\end{proof}

Next we prove the main theorem of this section.
Recall that in all scenarios, the $\{\hat{x},\hat{y}\}$ basis was defined so that $L(\hat{x}_b-\hat{x}_a) >\hat{y}_b-\hat{y}_a $, i.e. $L\hat{x}_b\geq \hat{y}_b$.
Moreover, recall that at the beginning of the section,
we assumed, without loss of generality,
that the long side of $P$ is vertical
and the short side of $P$ has non-negative slope.

Let $\Delta_{\hat{x}}(a,b)$ (respectively $\Delta_{\hat{y}}(a,b)$) denote the $\hat{x}$-coordinate difference (respectively the $\hat{y}$-coordinate difference) between $a$ and $b$.

We now have all the ingredients needed to prove the main theorem.
\begin{theorem}~\label{theo:theo}
    Suppose $a$ and $b$ are vertices in the parallelogram Delaunay graph. When $A\,\Delta_{\hat{x}}(a,b)\geq \Delta_{\hat{y}}(a,b)$ (i.e. we are in Scenario 2 or 3), we have
    \begin{equation}~\label{eq:case1}
        d_2^T(a,b)\leq \left(A+\sqrt{1+A^2+2A|\cos(\theta)|}\right)\hat{x}_{b}+\hat{y}_b.
    \end{equation}
    Else, (i.e. we are in Scenario 1 or 4), we have 
    \begin{equation}~\label{eq:case2}
        d_2^T(a,b)\leq \left(1+\sqrt{1+\frac{1}{A^2}+\frac{2|\cos(\theta)|}{A}}\right)\hat{x}_b+A\hat{y}_b.
    \end{equation}
\end{theorem}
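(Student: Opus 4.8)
The plan is to prove both inequalities simultaneously by strong induction on the rank of $d_2(a,b)$ among all pairs of vertices of $\mathcal{P}$; that is, I assume the statement for every pair whose Euclidean distance is strictly smaller. Throughout I keep the scenario fixed (hence the basis $\{\hat{x},\hat{y}\}$ and the value $L\in\{A,1/A\}$ are fixed), and I phrase the inductive claim for an \emph{arbitrary} gentle pair in this basis: $d_2^T(u,v)\le \gamma\,(\hat{x}_v-\hat{x}_u)+(\hat{y}_v-\hat{y}_u)$ when $L=A$, with $\gamma:=A+\sqrt{1+A^2+2A|\cos\theta|}$, and the $A$-weighted analogue $\delta\,(\hat{x}_v-\hat{x}_u)+A(\hat{y}_v-\hat{y}_u)$ when $L=1/A$, where $\delta:=1+\sqrt{1+1/A^2+2|\cos\theta|/A}$. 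Stating the hypothesis for all gentle pairs in the \emph{same} basis, rather than only those that the scenario rule assigns to it, is what lets the recursion stay inside one coordinate system and avoids re-deriving a scenario (and thus a different basis) for each sub-pair; note $|\cos\theta|=|\cos\theta_0|$ in every scenario, so the bound function is genuinely the same.

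The base case is when $(a,b)$ is an edge of $T$: then $d_2^T(a,b)=d_2(a,b)=\|\hat{x}_b\hat{x}+\hat{y}_b\hat{y}\|_2\le \hat{x}_b+\hat{y}_b$, using $\|\alpha\hat{x}+\beta\hat{y}\|_2\le\sqrt{\alpha^2+\beta^2+2\alpha\beta|\cos\theta|}\le\alpha+\beta$ for $\alpha,\beta\ge0$; since every coefficient of $\hat{x}_b$ in the two targets is at least $1$ and the $\hat{y}_b$-coefficient is at least $1$ (it is $A\ge1$ in the second), this is dominated by the claimed right-hand side. Next I reduce to the case where $P(a,b)$ is empty: if some vertex $p\ne a,b$ lies in $P(a,b)$, I split as $a\to p\to b$, apply the induction hypothesis to each half, and add, using that the target bound is affine in the coordinate differences so the halves sum to the bound for $(a,b)$. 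Here one must choose $p$ so that both sub-pairs drop strictly in rank and stay gentle, which is delicate precisely in the obtuse scenarios ($\theta=\pi-\theta_0$), where the $\hat{x}$-side of $P(a,b)$ can exceed the diagonal; I would isolate this as a short preliminary lemma.

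With $P(a,b)$ empty and $(a,b)$ not an edge, I invoke the Crossing Lemma (\lemref{crossing}). If no $\hat{P}_i$ is inductive, part~(1) already yields the target bound. Otherwise let $\hat{P}_j$ be the first inductive parallelogram and, by symmetry, assume its inductive point is the high point $c=h_j$ (the low point and Scenarios~1,4 are identical after replacing the invocation by part~(2b),~(2c) or~(2d) and carrying the $A$-weight). If $(c,b)$ is gentle I recurse on it directly; if it is steep with $c$ above the $L$-cone of $b$, \lemref{lem9} produces a monotone NE sub-path $c=h_j,\dots,h_{j'}$ whose endpoint $c'=h_{j'}$ satisfies $0\le\hat{y}_{c'}-\hat{y}_b\le L(\hat{x}_b-\hat{x}_{c'})$; I bound this sub-path by its monotone $L_1$-length $(\hat{x}_{c'}-\hat{x}_{h_j})+(\hat{y}_{h_j}-\hat{y}_{c'})$ (the descending analogue of \lemref{lem7}) and then recurse on the gentle pair $(c',b)$. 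In both branches I combine, by the triangle inequality, the Crossing-Lemma term $\gamma\hat{x}_{h_j}-(\hat{y}_{h_j}-\hat{y}_b)$, the (possibly empty) middle term, and the inductive tail $\gamma(\hat{x}_b-\hat{x}_{c'})+(\hat{y}_b-\hat{y}_{c'})$, and check that they collapse to $\gamma\hat{x}_b+\hat{y}_b$.

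I expect the main obstacle to be exactly this bookkeeping of cancellations. The delicate point is that the inductive tail must be \emph{entered} at a point $c'$ with $\hat{y}_{c'}\ge\hat{y}_b$, so that its bound contributes a nonpositive $\hat{y}$-term $\hat{y}_b-\hat{y}_{c'}\le0$ which absorbs the positive credit $-(\hat{y}_{h_j}-\hat{y}_b)$ left by the Crossing Lemma; this is precisely what the stopping condition of \lemref{lem9} guarantees, and it is the reason the two copies of $\hat{y}_b$ do not accumulate. Verifying that in every sub-case (high versus low inductive point, empty versus nonempty middle path, and the four scenarios with weight $1$ versus $A$) the three terms telescope to exactly $\gamma\hat{x}_b+\hat{y}_b$ (respectively $\delta\hat{x}_b+A\hat{y}_b$) is the crux; the geometric inputs (Crossing Lemma, \lemref{lem9}, and the monotone-path length of \lemref{lem7}) are tailored so that these corrections match, and the remaining work is to confirm the signs and coefficients line up in each branch rather than to produce any new estimate.
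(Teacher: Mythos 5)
Your overall architecture is the paper's: reduce to the case where $P(a,b)$ is empty, invoke the Crossing Lemma (Lemma~\ref{lem:crossing}) to reach the first inductive point, traverse a monotone NE/SE path via Lemma~\ref{lem:lem9} bounded by its $L_1$-length, and close with the induction hypothesis on the remaining gentle pair; your description of the telescoping in that main branch is accurate. However, there are two genuine gaps.

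First, your well-ordering is wrong. You induct on the rank of the Euclidean distance $d_2(a,b)$, but the pair you recurse on need not be Euclidean-closer. The stopping point $c'=h_{j'}$ of Lemma~\ref{lem:lem9} only satisfies $0\le \hat{y}_{c'}-\hat{y}_b\le L(\hat{x}_b-\hat{x}_{c'})$; with $L=A$ large, take (in the $\{\hat{x},\hat{y}\}$ basis, $\theta=\pi/2$ for concreteness) $a=(0,0)$, $b=(1,0)$ and $c'=(0.1,8)$ with $A=10$: the cone condition holds, yet $d_2(c',b)>8>1=d_2(a,b)$. Such configurations are exactly the ones Lemma~\ref{lem:lem9} is designed for (tall empty parallelograms), so the recursion can move to a pair of strictly larger Euclidean distance and your induction is not well-founded. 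The paper instead orders pairs by the size of the smallest scaled translate of $\hat{P}$ having both points on its boundary; in that measure the sub-pair $(c',b)$ has extent $\max\bigl(\hat{x}_b-\hat{x}_{c'},\,(\hat{y}_{c'}-\hat{y}_b)/L\bigr)=\hat{x}_b-\hat{x}_{c'}<\hat{x}_b$, so the rank genuinely drops. This is an easy repair, but as written the proof does not go through.

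Second, and more substantively, your treatment of the case where $P(a,b)$ is nonempty is based on a false premise. You propose to pick an interior vertex $p$ so that \emph{both} sub-pairs ``drop strictly in rank and stay gentle'' and defer this to a preliminary lemma; no such $p$ need exist. Partitioning $P(a,b)$ into the regions $\mathcal{A},\mathcal{B},\mathcal{C}$ as in the paper, a vertex $p$ makes both $(a,p)$ and $(p,b)$ gentle only if $p\in\mathcal{B}$, and all interior vertices may lie in $\mathcal{A}\cup\mathcal{C}$. In that situation one of the two legs is necessarily steep, i.e.\ falls under the \emph{other} inequality of the theorem (with the roles of $\hat{x}$ and $\hat{y}$ swapped), and your strengthened induction hypothesis --- stated only for gentle pairs in the fixed basis --- does not cover it. The paper's resolution needs two further ideas you are missing: (i) the split vertex must be chosen as a Delaunay \emph{neighbour} of $a$ (the closest vertex of $\mathcal{A}$, or failing that of $\hat{P}_a\cap\mathcal{C}$), so that the first leg is a single edge costing only its $L_1$-length rather than a full spanning bound; and (ii) a final cross-form comparison, e.g.
\begin{align*}
A\hat{x}_b+\Bigl(1+\sqrt{1+\tfrac{1}{A^2}+\tfrac{2|\cos(\theta)|}{A}}\Bigr)\hat{y}_b\;\le\;\Bigl(1+\sqrt{1+\tfrac{1}{A^2}+\tfrac{2|\cos(\theta)|}{A}}\Bigr)A\hat{x}_b+\hat{y}_b
=\Bigl(A+\sqrt{1+A^2+2A|\cos(\theta)|}\Bigr)\hat{x}_b+\hat{y}_b,
\end{align*}
valid because $A\hat{x}_b\ge\hat{y}_b$, which converts the bound obtained from the other scenario's inequality back into the one being proved. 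Without (i) and (ii) the nonempty case does not close, so this is a missing idea rather than bookkeeping.
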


\begin{proof}
 To prove Theorem~\ref{theo:theo} we use induction on pairs of vertices $(a,b)$ ordered from closest to furthest in the following sense: the proximity of vertices $a$ and $b$ is associated with the size of the smallest scaled translate of $\hat{P}$ which contains $a$ and $b$ on its boundary. In this sense, the base case consists of pairs of vertices $a,b$ which lie on the boundary of the smallest empty scaled translate of $\hat{P}$. Note that there cannot be any other vertices of $\mathcal{P}$ in such a scaled translate, otherwise $a,b$ would not be the closest pair of points. Hence there is an edge between $a,b$. This satisfies the statement of the theorem for both cases. This concludes the base case.

 Let $a,b$ be two vertices in the parallelogram Delaunay graph. Now, suppose that the statement holds for any pair of vertices associated with a smaller parallelogram. We first investigate Case (1), where $P(a,b)$ is empty. Case (2), where $P(a,b)$ is not empty, will follow. 

    \begin{enumerate}
        \item[(1)]Suppose $P(a,b)$ is empty. We will analyse two subcases, (i) Scenarios 2 and 3, and then (ii) Scenarios 1 and 4.
        \begin{enumerate}
            \item[(i)] Since we are in Scenario 2 or 3, we have $L = A $ and $A\hat{x}_b\geq \hat{y}_b$.

         If the parallelogram Delaunay graph contains the edge $(a,b)$, then the statement holds. \\
            Next, assume that there is no inductive parallelogram in the sequence $\hat{P}_1,\hat{P}_2,..,\hat{P}_k$, then by case $(1)$ of Lemma~\ref{lem:crossing} we have $$d_2^T(a,b)\leq \left(A+\sqrt{1+A^2+2A|\cos(\theta)|}\right)\hat{x}_b+\hat{y}_b.$$ Now we study the case where there exists an inductive parallelogram. Assume that $\hat{P}_i$ is the first inductive parallelogram. As in Lemma~\ref{lem:crossing}, we have two subcases to consider. First we will look at the case where $h_i$ is the inductive point of $\hat{P}_i$, and the case where $l_i$ is the inductive point will follow. Suppose for now that $h_i$ is the inductive point of $\hat{P}_i$, then using case$(2)(a)$ of Lemma~\ref{lem:crossing}, we have that $$d_2^T(a,h_i)+(\hat{y}_{h_i}-\hat{y}_b) \leq \left(A+\sqrt{1+A^2+2A|\cos(\theta)|}\right)\hat{x}_{h_i}$$ which can be re-arranged to obtain a bound on the distance between $a$ and $h_i$:
            \begin{equation}~\label{eq:i11}
                d_2^T(a,h_i) \leq \left(A+\sqrt{1+A^2+2A|\cos(\theta)|}\right)\hat{x}_{h_i}-(\hat{y}_{h_i}-\hat{y}_b).
            \end{equation}As in Lemma~\ref{lem:lem9}, we suppose $j> i$ is minimal such that $A(\hat{x}_b-\hat{x}_{h_j})\geq \hat{y}_{h_j}-\hat{y}_{b}\geq 0$ and $h_j$ is on the E side of $\hat{P}_j$. Then since all edges on the path $h_i,...,h_j$ are NE edges, we can bound their lengths using the triangle inequality. We get that 
            \begin{equation}~\label{eq:i12}
                d_2^T(h_i,h_j)\leq (\hat{x}_{h_{j}}-\hat{x}_{h_{i}})+(\hat{y}_{h_{i}}-\hat{y}_{h_{j}}).
            \end{equation} Since the smallest scaled translate of $\hat{P}$ that has both $h_j$ and $b$ on its boundary is smaller than the scaled translate which has $a$ and $b$ on its boundary, we can use the induction hypothesis. Moreover we have $A(\hat{x}_b-\hat{x}_{h_j})\geq \hat{y}_{h_j}-\hat{y}_b\geq 0$, hence we use inequality~\ref{eq:case1}:
            \begin{equation}~\label{eq:i13}
                d_2^T(h_j,b) \leq \left(A+\sqrt{1+A^2+2A|\cos(\theta)|}\right)(\hat{x}_b-\hat{x}_{h_j}) +(\hat{y}_{h_j}-\hat{y}_b).
            \end{equation}
  Using inequalities~\ref{eq:i11},~\ref{eq:i12} and~\ref{eq:i13} as well as the triangle inequality we get that
            \begin{align*}
                d_2^T(a,b) \leq&\ d_2^T(a,h_i)+d_2^T(h_i,h_j)+d_2^T(h_j,b)\\
                 \leq&\ \left(A+\sqrt{1+A^2+2A|\cos(\theta)|}\right)\hat{x}_{h_i}-(\hat{y}_{h_i}-\hat{y}_b)\\
                 &\ +(\hat{x}_{h_{j}}-\hat{x}_{h_{i}})+(\hat{y}_{h_{i}}-\hat{y}_{h_{j}})\\
                &\ +\left(A+\sqrt{1+A^2+2A|\cos(\theta)|}\right)(\hat{x}_b-\hat{x}_{h_j})+(\hat{y}_{h_j}-\hat{y}_b)\\
                 =&\ \left(A+\sqrt{1+A^2+2A|\cos(\theta)|}\right)\hat{x}_{h_i}+(\hat{x}_{h_{j}}-\hat{x}_{h_{i}})\\
                 &\ +\left(A+\sqrt{1+A^2+2A|\cos(\theta)|}\right)(\hat{x}_b-\hat{x}_{h_j})\\
                 =&\ \left(A+\sqrt{1+A^2+2A|\cos(\theta)|}\right)\hat{x}_b\\
                 &\ +\left(A+\sqrt{1+A^2+2A|\cos(\theta)|}-1\right)(\hat{x}_{h_i}-\hat{x}_{h_j})\\
                 \leq&\ \left(A+\sqrt{1+A^2+2A|\cos(\theta)|}\right)\hat{x}_b\\
                 &\ \phantom{.}\hskip 5cm \text{since $\hat{x}_{h_i} \leq \hat{x}_{h_j}$.}\\
                 <&\ \left(A+\sqrt{1+A^2+2A|\cos(\theta)|}\right)\hat{x}_b+\hat{y}_b
            \end{align*}
            This completes the proof for the case where the inductive point of $\hat{P}_i$ is $h_i$. 

         We now focus on the case when $l_i$ is the inductive point of $\hat{P}_i$. Using case$(2)(b)$ of Lemma~\ref{lem:crossing}, we have that $$d_2^T(a,l_i)-\hat{y}_{l_i} \leq \left(A+\sqrt{1+A^2+2A|\cos(\theta)|}\right)\hat{x}_{l_i},$$ which can be re-arranged to obtain a bound on the distance between $a$ and $l_i$:
            \begin{equation}~\label{eq:i21}
                d_2^T(a,l_i) \leq \left(A+\sqrt{1+A^2+2A|\cos(\theta)|}\right)\hat{x}_{l_i}+\hat{y}_{l_i}.
            \end{equation}
            As in Lemma~\ref{lem:lem9}, we suppose $j > i$ is minimal such that $A(\hat{x}_b-\hat{x}_{l_j})\geq \hat{y}_{b}-\hat{y}_{l_j}\geq 0$ and $l_j$ is on the E side of $\hat{P}_j$. Then since all the edges on the path $l_i,...,l_j$ are SE edges we can bound their length using the triangle inequality. We get that 
            \begin{equation}~\label{eq:i22}
                d_2^T(l_i,l_j)\leq (\hat{x}_{l_j}-\hat{x}_{l_i})+(\hat{y}_{l_j}-\hat{y}_{l_i}).
            \end{equation} Since the smallest scaled translate of $\hat{P}$ that has both $l_j$ and $b$ on its boundary is smaller than the scaled translate which has $a$ and $b$ on its boundary, we can use the induction hypothesis. Moreover, since $A(\hat{x}_b-\hat{x}_{l_j})\geq \hat{y}_{b}-\hat{y}_{l_j}\geq 0$, hence we use inequality~\ref{eq:case1}:
            \begin{equation}~\label{eq:i23}
                d_2^T(l_j,b)\leq \left(A+\sqrt{1+A^2+2A|\cos(\theta)|}\right)(\hat{x}_b-\hat{x}_{l_j}) +(\hat{y}_{b}-\hat{y}_{l_j})
            \end{equation} Using inequalities~\ref{eq:i21},~\ref{eq:i22} and~\ref{eq:i23} as well as the triangle inequality we get
            \begin{align*}
                d_2^T(a,b)\leq &\ d_2^T(a,l_i)+d_2^T(l_i,l_j)+d_2^T(l_j,b)\\
                \leq &\  \left(A+\sqrt{1+A^2+2A|\cos(\theta)|}\right)\hat{x}_{l_i}+\hat{y}_{l_i}  +(\hat{x}_{l_j}-\hat{x}_{l_i})+(\hat{y}_{l_j}-\hat{y}_{l_i})\\
                &\ + \left(A+\sqrt{1+A^2+2A|\cos(\theta)|}\right)(\hat{x}_b-\hat{x}_{l_j})  +(\hat{y}_{b}-\hat{y}_{l_j})\\
                =&\ \left(A+\sqrt{1+A^2+2A|\cos(\theta)|}\right)\hat{x}_{l_i} +(\hat{x}_{l_j}-\hat{x}_{l_i})\\
                &\ + \left(A+\sqrt{1+A^2+2A|\cos(\theta)|}\right)(\hat{x}_b-\hat{x}_{l_j})+\hat{y}_{b}\\
                = &\  \left(A+\sqrt{1+A^2+2A|\cos(\theta)|}\right)\hat{x}_b+ \hat{y}_{b}\\
                &\ + \left(A+\sqrt{1+A^2+2A|\cos(\theta)|}-1\right)(\hat{x}_{l_i}-\hat{x}_{l_j})\\
                \leq &\  \left(A+\sqrt{1+A^2+2A|\cos(\theta)|}\right)\hat{x}_b+ \hat{y}_{b}\\
                &\ \phantom{.}\hskip 5cm \text{since $\hat{x}_{l_i} \leq \hat{x}_{l_j}$.}\\
            \end{align*}
            This completes the proof of Case (1)(i).

            \item[(ii)] Since we are in Scenario 1 or 4, we have $L = \frac{1}{A}$ and $\frac{1}{A}\hat{x}_b\geq\hat{y}_b$. 
       
        If the parallelogram Delaunay graph contains the edge $(a,b)$, then the statement holds.\\
       Next, assume that there are no inductive parallelograms in the sequence $\hat{P}_1,\hat{P}_2,..,\hat{P}_k$ then by case $(1)$ of Lemma~\ref{lem:crossing} we know that  \begin{align*}
            d_2^T(a,b)\leq \left(\frac{1}{A}+\sqrt{1+\frac{1}{A^2}+\frac{2|\cos(\theta)|}{A}}\right)\hat{x}_b+\hat{y}_b \leq \left(1+\sqrt{1+\frac{1}{A^2}+\frac{2|\cos(\theta)|}{A}}\right)\hat{x}_b+A\hat{y}_b
        \end{align*}
        Now we study the case where there exists an inductive parallelogram. Assume that $\hat{P}_i$ is the first inductive parallelogram. As in Lemma~\ref{lem:crossing} we have two subcases to consider. First we will look at the case where $h_i$ is the inductive point of $\hat{P}_i$, and the case where $l_i$ is the inductive point will follow. Suppose for now that $h_i$ is the inductive point of $\hat{P}_i$,  using case $(2)(c)$ of Lemma~\ref{lem:crossing} we have $$d_2^T(a,h_i)+A(\hat{y}_{h_i}-\hat{y}_b) \leq \left(1+\sqrt{1+\frac{1}{A^2}+\frac{2|\cos(\theta)|}{A}}\right)\hat{x}_{h_i},$$ which can be re-arranged to obtain a bound on the distance between $a$ and $h_i$:
        \begin{equation}\label{eq:ii11}
            d_2^T(a,h_i)\leq \left(1+\sqrt{1+\frac{1}{A^2}+\frac{2|\cos(\theta)|}{A}}\right)\hat{x}_{h_i}-A(\hat{y}_{h_i}-\hat{y}_b) .
        \end{equation} As in Lemma~\ref{lem:lem9}, we suppose $j>i$ is minimal such that $\frac{1}{A}(\hat{x}_b-\hat{x}_{h_j})\geq \hat{y}_{h_j}-\hat{y}_{b}\geq 0$ and $h_j$ is on the E side of $\hat{P}_j$. Then since all edges on the path $h_i,...,h_j$ are NE edges we can bound their length using the triangle inequality. We get that 
        \begin{equation}\label{eq:ii12}
            d_2^T(h_i,h_j)\leq (\hat{x}_{h_{j}}-\hat{x}_{h_{i}})+(\hat{y}_{h_{i}}-\hat{y}_{h_{j}}).
        \end{equation} Since the smallest scaled translate of $\hat{P}$ that has both $h_j$ and $b$ on its boundary is smaller than the scaled translate which has $a$ and $b$ on its boundary, we can use the induction hypothesis. Moreover we have $\frac{1}{A}(\hat{x}_b-\hat{x}_{h_j})\geq \hat{y}_{h_j}-\hat{y}_b\geq 0$, hence we use inequality~\ref{eq:case2}:
        \begin{equation}\label{eq:ii13}
             d_2^T(h_j,b)\leq  
            \left(1+\sqrt{1+\frac{1}{A^2}+\frac{2|\cos(\theta)|}{A}}\right)(\hat{x}_b-\hat{x}_{h_j}) +A(\hat{y}_{h_j}-\hat{y}_b).
        \end{equation}
       Using inequalites~\ref{eq:ii11},~\ref{eq:ii12} and~\ref{eq:ii13} as well as the triangle inequality we get
            \begin{align*}
                d_2^T(a,b) \leq&\ d_2^T(a,h_i)+d_2^T(h_i,h_j)+d_2^T(h_j,b)\\
                 \leq &\ \left(1+\sqrt{1+\frac{1}{A^2}+\frac{2|\cos(\theta)|}{A}}\right)\hat{x}_{h_i}-A(\hat{y}_{h_i}-\hat{y}_b) + (\hat{x}_{h_{j}}-\hat{x}_{h_{i}})+(\hat{y}_{h_{i}}-\hat{y}_{h_{j}})\\
                 &\ +\left(1+\sqrt{1+\frac{1}{A^2}+\frac{2|\cos(\theta)|}{A}}\right)(\hat{x}_b-\hat{x}_{h_j})  +A(\hat{y}_{h_j}-\hat{y}_b)\\
                 \leq&\ \left(1+\sqrt{1+\frac{1}{A^2}+\frac{2|\cos(\theta)|}{A}}\right)\hat{x}_{h_i}-A(\hat{y}_{h_i}-\hat{y}_b)  + (\hat{x}_{h_{j}}-\hat{x}_{h_{i}})+A(\hat{y}_{h_{i}}-\hat{y}_{h_{j}})\\
                 &\ + \left(1+\sqrt{1+\frac{1}{A^2}+\frac{2|\cos(\theta)|}{A}}\right)(\hat{x}_b-\hat{x}_{h_j})  +A(\hat{y}_{h_j}-\hat{y}_b)\\
                 &\ \phantom{.}\hskip 5cm \text{since $A \geq 1$,}\\ 
                 \leq&\ \left(1+\sqrt{1+\frac{1}{A^2}+\frac{2|\cos(\theta)|}{A}}\right)\hat{x}_b  + \sqrt{1+\frac{1}{A^2}+\frac{2|\cos(\theta)|}{A}}(\hat{x}_{h_i}- \hat{x}_{h_j})\\
                 \leq&\ \left(1+\sqrt{1+\frac{1}{A^2}+\frac{2|\cos(\theta)|}{A}}\right)\hat{x}_b \\
                 &\ \phantom{.}\hskip 5cm \text{since $\hat{x}_{h_i} \leq \hat{x}_{h_j}$.}\\ 
                 < &\ \left(1+\sqrt{1+\frac{1}{A^2}+\frac{2|\cos(\theta)|}{A}}\right)\hat{x}_b+A\hat{y}_b
            \end{align*}
           Now that we have proved the theorem when the inductive point of $\hat{P}_i$ is $h_i$, we can focus on the case where $l_i$ is the inductive point of $\hat{P}_i$. Using case $(2)(d)$ of Lemma~\ref{lem:crossing} we have that $$ d_2^T(a,l_i)-A\hat{y}_{l_i} \leq \left(1+\sqrt{1+\frac{1}{A^2}+\frac{2|\cos(\theta)|}{A}}\right)\hat{x}_{l_i},$$ which can be re-arranged to obtain a bound on the distance between $a$ and $l_i$
            \begin{equation}\label{eq:ii21}
                d_2^T(a,l_i)\leq \left(1+\sqrt{1+\frac{1}{A^2}+\frac{2|\cos(\theta)|}{A}}\right)\hat{x}_{l_i}+A\hat{y}_{l_i}.
            \end{equation} As in Lemma~\ref{lem:lem9}, we suppose $j > i$ is minimal such that $\frac{1}{A}(\hat{x}_b-\hat{x}_{l_j})\geq \hat{y}_{b}-\hat{y}_{l_j}\geq 0$ and $l_j$ is on the E side of $\hat{P}_j$. Then since all the edges on the path $l_i,...,l_j$ are SE edges we can bound their length using the triangle inequality. We get that
            \begin{equation}\label{eq:ii22}
                d_2^T(l_i,l_j)\leq (\hat{x}_{l_{j}}-\hat{x}_{l_{i}})+(\hat{y}_{l_{j}}-\hat{y}_{l_{i}}).
            \end{equation} Since the smallest scaled translate of $\hat{P}$ that has both $l_j$ and $b$ on its boundary is smaller than the scaled translate which has $a$ and $b$ on its boundary, we can use the induction hypothesis. Moreover, since $\frac{1}{A}(\hat{x}_b-\hat{x}_{l_j})\geq \hat{y}_{b}-\hat{y}_{l_j}\geq 0$, we use inequality~\ref{eq:case2}:
            \begin{equation}\label{eq:ii23}
                d_2^T(l_j,b)\leq
            \left(1+\sqrt{1+\frac{1}{A^2}+\frac{2|\cos(\theta)|}{A}}\right)(\hat{x}_b-\hat{x}_{l_j}) +A(\hat{y}_{b}-\hat{y}_{l_j}).
            \end{equation}
            Using inequalites~\ref{eq:ii21},~\ref{eq:ii22} and~\ref{eq:ii23} as well as the triangle inequality we get
            \begin{align*}
                d_2^T(a,b)\leq&\  d_2^T(a,l_i)+d_2^T(l_i,l_j)+d_2^T(l_j,b)\\
                \leq&\ \left(1+\sqrt{1+\frac{1}{A^2}+\frac{2|\cos(\theta)|}{A}}\right)\hat{x}_{l_i}+A\hat{y}_{l_i} + (\hat{x}_{l_{j}}-\hat{x}_{l_{i}})+(\hat{y}_{l_{j}}-\hat{y}_{l_{i}})\\
                &\ + \left(1+\sqrt{1+\frac{1}{A^2}+\frac{2|\cos(\theta)|}{A}}\right)(\hat{x}_b-\hat{x}_{l_j}) +A(\hat{y}_{b}-\hat{y}_{l_j})\\
                \leq&\ \left(1+\sqrt{1+\frac{1}{A^2}+\frac{2|\cos(\theta)|}{A}}\right)\hat{x}_{l_i}+A\hat{y}_{l_i} + (\hat{x}_{l_{j}}-\hat{x}_{l_{i}})+A(\hat{y}_{l_{j}}-\hat{y}_{l_{i}})\\
                &\ + \left(1+\sqrt{1+\frac{1}{A^2}+\frac{2|\cos(\theta)|}{A}}\right)(\hat{x}_b-\hat{x}_{l_j})  +A(\hat{y}_{b}-\hat{y}_{l_j})\\
                &\ \phantom{.}\hskip 5cm \text{since $A\geq 1$,}\\
                \leq&\ \left(1+\sqrt{1+\frac{1}{A^2}+\frac{2|\cos(\theta)|}{A}}\right)\hat{x}_b + \sqrt{1+\frac{1}{A^2}+\frac{2|\cos(\theta)|}{A}}(\hat{x}_{l_i}-\hat{x}_{l_j})\\
                &\ +A\hat{y}_{b} \\
                \leq&\ \left(1+\sqrt{1+\frac{1}{A^2}+\frac{2|\cos(\theta)|}{A}}\right)\hat{x}_b + A\hat{y}_{b}\\
                &\ \phantom{.}\hskip 5cm \text{since $\hat{x}_{l_i}\leq\hat{x}_{l_j}$.}\\
            \end{align*}
        This completes the proof for Case (1)(ii), which completes the proof for Case (1).
        \end{enumerate}
  
        \item[(2)] Assume $P(a,b)$ contains at least one vertex of $\mathcal{P}$. We distinguish (i) Scenarios 2 and 3 from (ii) Scenarios 1 and 4.        
        \begin{enumerate}
            \item[(i)] Assume we are in Scenario 2 or 3.

            \begin{figure}
   \captionsetup{justification=centering}
            \centerline{
            \includegraphics[scale=1]{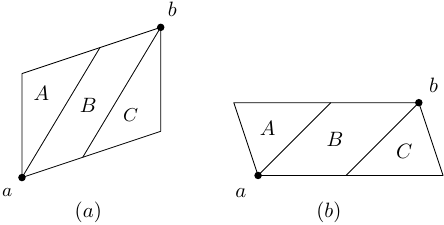}}
            \caption{Illustration of the three regions in $P(a,b)$. 
            Scenario 3 is shown on the left $(a)$, and Scenario 1 is shown on the right $(b)$.}
            \label{fig:theo_nempty}
            \end{figure}
            
             To simplify the proof, we partition $P(a,b)$ in three regions $\mathcal{A}$, $\mathcal{B}$ and $\mathcal{C}$. Intuitively, the regions are obtained by drawing the line through $a$ and the line through $b$ with the same slope as the positive diagonal of $\hat{P}$. The labelling of these regions $\mathcal{A}$, $\mathcal{B}$ and $\mathcal{C}$ is done from left to right (refer to Figure~\ref{fig:theo_nempty}$(a)$). \\
            More precisely, we have
            \begin{align*}
            \mathcal{A}=&\{ p \in P(a,b) : A(\hat{x}_p-\hat{x}_a)< \hat{y}_p-\hat{y}_a\}, \\
            \mathcal{B}=&\{ p \in P(a,b) : A(\hat{x}_p-\hat{x}_a) \geq \hat{y}_p-\hat{y}_a \land A(\hat{x}_b-\hat{x}_p) \geq \hat{y}_b-\hat{y}_p\},\\
            \mathcal{C}=&\{ p \in P(a,b) : A(\hat{x}_b-\hat{x}_p ) < \hat{y}_b-\hat{y}_p\}.
            \end{align*}
            
            Observe that a point $p\in\mathcal{A}$ also satisfies $A(\hat{x}_b-\hat{x}_p)> \hat{y}_b-\hat{y}_p $. Moreover a point $p\in\mathcal{C}$ also satisfies $A(\hat{x}_p-\hat{x}_a)<\hat{y}_p-\hat{y}_a$.
            
             When $p\in \mathcal{B}$, by definition we have that both $A(\hat{x}_p-\hat{x}_a)\geq \hat{y}_p-\hat{y}_a$ and $A(\hat{x}_b-\hat{x}_p)\geq \hat{y}_b-\hat{y}_p$ are satisfied. We can thus use inequality~\ref{eq:case1} on both pairs of vertices $(a,p)$ and $(p,b)$. 
            We get
            \begin{align*}
                d_2^T(a,b) \leq &\ d_2^T(a,p)+d_2^T(p,b)\\
                \leq &\ \left(A+\sqrt{1+A^2+2A|\cos(\theta)|}\right)(\hat{x}_p-\hat{x}_{a}) +(\hat{y}_{p}-\hat{y}_{a})\\
                &\ +\left(A+\sqrt{1+A^2+2A|\cos(\theta)|}\right)(\hat{x}_b-\hat{x}_{p}) 
                +(\hat{y}_{b}-\hat{y}_{p})\\
                = &\ \left(A+\sqrt{1+A^2+2A|\cos(\theta)|}\right)\hat{x}_p \\
                &\ +\left(A+\sqrt{1+A^2+2A|\cos(\theta)|}\right)(\hat{x}_b-\hat{x}_{p})+\hat{y}_{b}\\
                =&\  \left(A+\sqrt{1+A^2+2A|\cos(\theta)|}\right)\hat{x}_b+\hat{y}_{b} .
            \end{align*}

            Now we look at the case when there is no vertex in $\mathcal{B}$. Let $\hat{P}_a$ be the smallest homothet of $\hat{P}$ such that $a$ is on its W side and there exists a vertex $p\in \mathcal{A}$ on its boundary. In an analogous fashion, we let $\hat{P}_b$ be the smallest homothet of $\hat{P}$ such that $b$ is on its E side and there exists a vertex $q\in \mathcal{C}$ on its boundary. By assumption, there exists a vertex inside $P(a,b)$, thus either $p$ or $q$ exists. Assume for now that the vertex $p\in\mathcal{A}$ exists. The smallest scaled translate of $\hat{P}$ which has both $p$ and $b$ on its boundary is smaller than that of $a$ and $b$. Moreover we that $A(\hat{x}_b-\hat{x}_p ) > \hat{y}_b-\hat{y}_p$ hence we use inequality~\ref{eq:case1} to bound the length of the path between $p$ and $b$.  Therefore, we can apply the induction hypothesis on the pair $(p,b)$. When $(a,p)$ is an edge we have 
            \begin{align*}
               d_2^T(a,b)\leq &\ d_2^T(a,p)+d_2^T(p,b)\\ 
               = &\ d_2(a,p)+d_2^T(p,b)\\
               \leq&\  (\hat{x}_p-\hat{x}_{a})+(\hat{y}_{p}-\hat{y}_{a})+ \left(A+\sqrt{1+A^2+2A|\cos(\theta)|}\right)(\hat{x}_b-\hat{x}_{p}) +(\hat{y}_{b}-\hat{y}_{p})\\
               = &\ \hat{x}_p+ \left(A+\sqrt{1+A^2+2A|\cos(\theta)|}\right)(\hat{x}_b-\hat{x}_{p}) +\hat{y}_{b}\\
               \leq &\ \left(A+\sqrt{1+A^2+2A|\cos(\theta)|}\right)\hat{x}_b+\hat{y}_{b}
            \end{align*}
            The proof when a vertex $q$ exists and $(q,b)$ is an edge is similar.

             If we suppose now that $(a,p)$ is not an edge in the parallelogram Delaunay graph, then there exists at least one vertex in $\hat{P}_a \cap \mathcal{C}$. Let $p'\in \hat{P}_a \cap \mathcal{C}$ be the vertex closest to $a$, then we have that $(a,p')$ is an edge. The smallest scaled translate of $\hat{P}$ which has both $p'$ and $b$ on its boundary is smaller than that of $a$ and $b$. Moreover we have that $A(\hat{x}_b-\hat{x}_{p'} ) < \hat{y}_b-\hat{y}_{p'}$ hence we use inequality~\ref{eq:case2} to bound the length of the path between $p'$ and $b$.     
            \begin{align*}
                d_2^T(p',b)\leq &\ A\left(\hat{x}_b-\hat{x}_{p'}\right)  + \left(1+\sqrt{1+\frac{1}{A^2}+\frac{2|\cos(\theta)|}{A}}\right)(\hat{y}_{b}-\hat{y}_{p'}) .
            \end{align*}
            Since $p' \in \mathcal{C}$, we applied the induction hypothesis for instances of Scenarios 1 and 4 for the path from $p'$ to $b$. This explains why we swapped the $\{\hat{x},\hat{y}\}$ basis. We get
            \begin{align*}
                d_2^T(a,b)\leq &\ d_2^T(a,p')+d_2^T(p',b)\\
               =&\ d_2(a,p')+d_2^T(p',b)\\
               \leq &\ (\hat{x}_{p'}-\hat{x}_{a})+(\hat{y}_{p'}-\hat{y}_{a}) +  A(\hat{x}_b-\hat{x}_{p'})\\
               &\ +\left(1+\sqrt{1+\frac{1}{A^2}+\frac{2|\cos(\theta)|}{A}}\right)(\hat{y}_{b}-\hat{y}_{p'})\\
               = &\ \hat{x}_{p'}+\hat{y}_{p'} +  A(\hat{x}_b-\hat{x}_{p'})\\
               &\ +\left(1+\sqrt{1+\frac{1}{A^2}+\frac{2|\cos(\theta)|}{A}}\right)(\hat{y}_{b}-\hat{y}_{p'})\\
               \leq &\  A\hat{x}_b+\left(1+\sqrt{1+\frac{1}{A^2}+\frac{2|\cos(\theta)|}{A}}\right)\hat{y}_{b} \\
                &\ \phantom{.}\hskip 5cm \text{since $A\geq 1$.}\\
            \end{align*}
            Since we are in Scenario 2 or 3, $A\hat{x}_b\geq \hat{y}_b$ and we know that $1+\sqrt{1+\frac{1}{A^2}+\frac{2|\cos(\theta)|}{A}}>1$, we have\footnote{From the fact that if $n>m>0$ and $k>1$ then $kn+m>km+n$ (Proof: $kn+m>km+n \iff k(n-m)+(m-n)>0 \iff k(n-m)>(n-m)$ which is true for $k\geq 1$). In our case, we have that $n=A\hat{x}_b, m=\hat{y}_b$ and $k=1+\sqrt{1+\frac{1}{A^2}+\frac{2|\cos(\theta)|}{A}}$.}
            \begin{align*}
                d_2^T(a,b) &\leq A\hat{x}_b+\left(1+\sqrt{1+\frac{1}{A^2}+\frac{2|\cos(\theta)|}{A}}\right)\hat{y}_{b}\\
                &\leq \left(1+\sqrt{1+\frac{1}{A^2}+\frac{2|\cos(\theta)|}{A}}\right)A\hat{x}_{b}+\hat{y}_{b}\\ &=\left(A+\sqrt{1+A^2+2A|\cos(\theta)|}\right)\hat{x}_{b}+\hat{y}_{b}
            \end{align*}
            This completes the proof when we are in Scenario 2 or 3.

            \item[(ii)]
            In a similar way as in Scenario 2 or 3, we partition $P(a,b)$ in three regions $\mathcal{A}$, $\mathcal{B}$ and $\mathcal{C}$  (refer to Figure~\ref{fig:theo_nempty}$(b)$). 
            \begin{align*}
            \mathcal{A} =& \{p\in P(a,b) : A(\hat{x}_b-\hat{x}_p)\geq \hat{y}_b-\hat{y}_p \},\\
            \mathcal{B}=& \{p\in P(a,b) :  A(\hat{x}_b-\hat{x}_p)<\hat{y}_b-\hat{y}_p \land A(\hat{x}_p-\hat{x}_a) < \hat{y}_p-\hat{y}_a\}, \\
            \mathcal{C}=& \{p\in P(a,b): A(\hat{x}_p-\hat{x}_a)\geq \hat{y}_p-\hat{y}_a\}.
            \end{align*}
            
            Observe that a point $p\in\mathcal{A}$ also satisfies $A(\hat{x}_p-\hat{x}_a)>(\hat{y}_p-\hat{y}_a)$. Moreover a point $p\in\mathcal{C}$ also satisfies $A(\hat{x}_b-\hat{x}_p)<\hat{y}_b-\hat{y}_p$.
 
            When $p\in \mathcal{B}$, by induction, we have that both $A(\hat{x}_p-\hat{x}_a)< \hat{y}_p-\hat{y}_a$ and $A(\hat{x}_b-\hat{x}_p)< \hat{y}_b-\hat{y}_p$ are satisfied. We can thus use inequality~\ref{eq:case2} on both pairs of vertices $(a,p)$ and $(p,b)$ respectively. We get
            \begin{align*}
                d_2^T(a,b)\leq &\ d_2^T(a,p)+d_2^T(p,b)\\
                \leq &\  \left(1+\sqrt{1+\frac{1}{A^2}+\frac{2|\cos(\theta)|}{A}}\right)(\hat{x}_p-\hat{x}_{a}) +A(\hat{y}_{p}-\hat{y}_{a})\\
                &\ +\left(1+\sqrt{1+\frac{1}{A^2}+\frac{2|\cos(\theta)|}{A}}\right)(\hat{x}_b-\hat{x}_{p})  +A(\hat{y}_{b}-\hat{y}_{p})\\
                = &\ \left(1+\sqrt{1+\frac{1}{A^2}+\frac{2|\cos(\theta)|}{A}}\right)\hat{x}_p+A\hat{y}_{p}\\
                &\ +\left(1+\sqrt{1+\frac{1}{A^2}+\frac{2|\cos(\theta)|}{A}}\right)(\hat{x}_b-\hat{x}_{p})  +A(\hat{y}_{b}-\hat{y}_{p})\\
                =&\  \left(1+\sqrt{1+\frac{1}{A^2}+\frac{2|\cos(\theta)|}{A}}\right)\hat{x}_b+A\hat{y}_{b}.
            \end{align*}
             Now we look at the case when there is no vertex in $\mathcal{B}$. Let $\hat{P}_a$ be the smallest homothet of $\hat{P}$ such that $a$ is on its W side and there exists a vertex $p\in \mathcal{A}$ on its boundary. In an analogous fashion, we let $\hat{P}_b$ be the smallest homothet of $\hat{P}$ such that $b$ is on its E side and there exists a vertex $q\in \mathcal{C}$ on its boundary. By assumption, there exists a vertex inside $P(a,b)$, thus either $p$ or $q$ exists. Assume for now that the vertex $p\in\mathcal{A}$ exists. The smallest scaled translate of $\hat{P}$ which has both $p$ and $b$ on its boundary is smaller than that of $a$ and $b$. Moreover we that $A(\hat{x}_b-\hat{x}_p ) > \hat{y}_b-\hat{y}_p$ hence we use inequality~\ref{eq:case1} to bound the length of the path between $p$ and $b$.  
 Therefore, we can apply the induction hypothesis on the pair $(p,b)$. We get
            \begin{align*}
                d_2^T(p,b)\leq &\  \left(A+\sqrt{1+A^2+2A|\cos(\theta)|}\right)(\hat{y}_{b}-\hat{y}_{p})\\
                &\ +(\hat{x}_b-\hat{x}_{p}) .
            \end{align*}
            Since $p\in \mathcal{A}$, we applied the induction hypothesis from Scenarios 2 and 3 for the path from $p$ to $b$. This explains why we swapped the $\{\hat{x},\hat{y}\}$. 
            When $(a,p)$ is an edge we have:
            \begin{align*}
                d_2^T(a,b) \leq &\  d_2^T(a,p)+d_2^T(p,b)\\
                = &\ d_2(a,p)+d_2^T(p,b)\\
                \leq &\  (\hat{x}_p-\hat{x}_{a})+(\hat{y}_{p}-\hat{y}_{a})\\
                &\ + \left(A+\sqrt{1+A^2+2A|\cos(\theta)|}\right)(\hat{y}_{b}-\hat{y}_{p})  +(\hat{x}_b-\hat{x}_{p})\\
                =&\ \hat{y}_{p}+ \left(A+\sqrt{1+A^2+2A|\cos(\theta)|}\right)(\hat{y}_{b}-\hat{y}_{p}) + \hat{x}_b\\
                \leq &\ \left(A+\sqrt{1+A^2+2A|\cos(\theta)|}\right)\hat{y}_b+\hat{x}_{b} .
            \end{align*}
            Since $\frac{1}{A}\hat{x}_b>\hat{y}_b$, we have $\hat{x}_b>A\hat{y}_b$, and since\\
            $1+\sqrt{1+\frac{1}{A^2}+\frac{2|\cos(\theta)|}{A}}>1$, we find
            \begin{align*}
                d_2^T(a,b)&\leq \left(A+\sqrt{1+A^2+2A|\cos(\theta)|}\right)\hat{y}_b+\hat{x}_{b}\\
                &=\left(1+\sqrt{1+\frac{1}{A^2}+\frac{2|\cos(\theta)|}{A}}\right)A\hat{y}_b+\hat{x}_{b}\\
                &\leq \left(1+\sqrt{1+\frac{1}{A^2}+\frac{2|\cos(\theta)|}{A}}\right)\hat{x}_b+A\hat{y}_{b}. 
            \end{align*}
            The proof when a vertex $q$ exists and $(q,b)$ is an edge is similar.\\
            If we suppose now that $(a,p)$ is not an edge in the parallelogram Delaunay graph, then this means that there exists at least one vertex in $\hat{P}_a \cap \mathcal{C}$. Let $p'\in \hat{P}_a \cap \mathcal{C}$ be the vertex closest to $a$, then we have that $(a,p')$ is an edge. The smallest scaled translate of $\hat{P}$ which has both $p$ and $b$ on its boundary is smaller than that of $a$ and $b$. Moreover we that $A(\hat{x}_b-\hat{x}_{p'} ) < \hat{y}_b-\hat{y}_{p'}$ hence we use inequality~\ref{eq:case2} to bound the length of the path between $p$ and $b$. This gives us:
            \begin{align*}
                d_2^T(a,b)\leq &\ d_2^T(a,p')+d_2^T(p',b)\\
                = &\ d_2(a,p')+d_2^T(p',b)\\
                \leq &\ (\hat{x}_{p'}-\hat{x}_{a})+(\hat{y}_{p'}-\hat{y}_{a})\\
                &\ +\left(1+\sqrt{1+\frac{1}{A^2}+\frac{2|\cos(\theta)|}{A}}\right)(\hat{x}_b-\hat{x}_{p'}) +A(\hat{y}_{b}-\hat{y}_{p'})\\
                = &\ \hat{x}_{p'}+\hat{y}_{p'}+\left(1+\sqrt{1+\frac{1}{A^2}+\frac{2|\cos(\theta)|}{A}}\right)(\hat{x}_b-\hat{x}_{p'}) +A(\hat{y}_{b}-\hat{y}_{p'})\\
                \leq &\ \left(1+\sqrt{1+\frac{1}{A^2}+\frac{2|\cos(\theta)|}{A}}\right)\hat{x}_b+A\hat{y}_{b}
            \end{align*}
            This completes the argument for the proof of the theorem. 
        \end{enumerate}
    \end{enumerate}

\end{proof}

 Now that we have proved Theorem~\ref{theo:theo}, we can use the inequalities to bound the spanning ratio of the parallelogram Delaunay graph.\\
In the event that we are in Scenario 2 or 3, we get:
\begin{align}~\label{eq:parat_max}
\begin{split}
    \frac{d_2^T(a,b)}{d_2(a,b)} &\leq \frac{\left(A+\sqrt{1+A^2+2A|\cos(\theta)|}\right)\hat{x}_b+\hat{y}_b}{\sqrt{\hat{x}_b^2+\hat{y}_b^2-2\hat{x}_b\hat{y}_b\cos(\pi-\theta)}} = \frac{A+\sqrt{1+A^2+2A|\cos(\theta)|}+\frac{\hat{y}_b}{\hat{x}_b}}{\sqrt{1+(\frac{\hat{y}_b}{\hat{x}_b})^2+2\frac{\hat{y}_b}{\hat{x}_b}\cos(\theta)}}.
\end{split}
\end{align}
Let $f_{2,3}(\hat{y}_b/\hat{x}_b)$ be this function.

The derivative of $f_{2,3}(\hat{y}_b/\hat{x}_b)$ is equal to $0$ whenever
$$\frac{\hat{y}_b}{\hat{x}_b}=\frac{-\cos(\theta)\sqrt{1+A^2+2A|\cos(\theta)|}-A\cos(\theta)+1}{\sqrt{1+A^2+2A|\cos(\theta)|}+A-\cos(\theta)} ,$$
in which case $f_{2,3}(\hat{y}_b/\hat{x}_b)$ is equal to
\begin{align}
\label{max.span}
    \frac{\sqrt{2}\sqrt{1+A^2+A(|\cos(\theta)|-\cos(\theta))+(A-\cos(\theta))\sqrt{1+A^2+2A|\cos(\theta)|}}}{\sin(\theta)}.
\end{align}
Therefore,
the maximum of $f_{2,3}(\hat{y}_b/\hat{x}_b)$ is at least the expression in~\eqref{max.span}.

On the other hand, if we are in Scenario 1 or 4, we get that 
\begin{align}~\label{eq:paraw_max}
\begin{split}
    \frac{d_2^T(a,b)}{d_2(a,b)} &\leq  \frac{\bigg(1+\sqrt{1+\frac{1}{A^2}+\frac{2|\cos(\theta)|}{A}}\bigg)\hat{x}_b+A\hat{y}_b}{\sqrt{\hat{x}_b^2+\hat{y}_b^2-2\hat{x}_b\hat{y}_b\cos(\pi-\theta)}}= \frac{A+\sqrt{1+A^2+2A|\cos(\theta)|}+A^2\frac{\hat{y}_b}{\hat{x}_b}}{A\sqrt{1+(\frac{\hat{y}_b}{\hat{x}_b})^2+2\frac{\hat{y}_b}{\hat{x}_b}\cos(\theta)}}.
\end{split} 
\end{align}
Let $f_{1,4}(\hat{y}_b/\hat{x}_b)$ be this function.

The derivative of $f_{1,4}(\hat{y}_b/\hat{x}_b)$ is equal to $0$ whenever
$$\frac{\hat{y}_b}{\hat{x}_b}=\frac{-\cos(\theta)\sqrt{1+A^2+2A|\cos(\theta)|}+A^2-A\cos(\theta)}{\sqrt{1+A^2+2A|\cos(\theta)|}-A^2\cos(\theta)+A} ,$$
in which case $f_{1,4}(\hat{y}_b/\hat{x}_b)$ is equal to
\begin{align}
\label{candidate.max}
    \frac{\sqrt{(1+A^2)^2+2A(|\cos(\theta)|-A^2\cos(\theta))+2 A( 1- A \cos(\theta)) \sqrt{1 + A^2 + 2 A |\cos(\theta)|}}}{A\sin(\theta)}.
\end{align}
The three candidate values for the maximum of $f_{2,3}(\hat{y}_b/\hat{x}_b)$ (respectively $f_{1,4}(\hat{y}_b/\hat{x}_b)$) are 
\begin{enumerate}
    \item[(1)] $f_{2,3}(0)$ (respectively $f_{1,4}(0)$),
    \item[(2)] $\lim_{\hat{y}_b / \hat{x}_b \rightarrow \infty} f_{2,3}(\hat{y}_b/\hat{x}_b) = 1$ (respectively $\lim_{\hat{y}_b / \hat{x}_b \rightarrow \infty} f_{1,4}(\hat{y}_b/\hat{x}_b) = A$) and,
    \item[(3)]the value of $f_{2,3}$ (respectively $f_{1,4}$) when its derivative is $0$, which we denote by $f_{2,3}^*$ (respectively $f_{1,4}^*$).
\end{enumerate}
We can show that $f_{2,3}(0) \geq f_{1,4}(0)$,
$f_{2,3}^* \geq A$
and $f_{2,3}^* \geq f_{1,4}^*$.
Hence, the maximum occurs in Scenario 2 or 3. 
Moreover, we can show that the maximum value of $f_{2,3}(\hat{y}_b/\hat{x}_b)$ is obtained when its derivative is $0$.
As such, we find the following expression for the spanning ratio:
\begin{align*}
\frac{\sqrt{2}\sqrt{1+A^2+A(|\cos(\theta)|-\cos(\theta))+(A-\cos(\theta))\sqrt{1+A^2+2A|\cos(\theta)|}}}{\sin(\theta)}.
\end{align*}
The worst case of the above expression occurs when $\theta = \pi - \theta_0$, i.e., when we are in Scenario 2.
This allows us to conclude with the following.
\begin{theorem}
\label{thm.upper.bound}
The spanning ratio of a parallelogram Delaunay graph is at most
\begin{align}
\label{final.formula}
\frac{\sqrt{2}\sqrt{1+A^2+2A\cos(\theta_0)+(A+\cos(\theta_0))\sqrt{1+A^2+2A\cos(\theta_0)}}}{\sin(\theta_0)}.
\end{align}
\end{theorem}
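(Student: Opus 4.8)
The plan is to turn the two inductive inequalities of Theorem~\ref{theo:theo} into upper bounds on the ratio $d_2^T(a,b)/d_2(a,b)$, maximize the resulting one-variable functions, and then pick the scenario-dependent angle $\theta$ that makes this maximum largest. First I would divide \eqref{eq:case1} and \eqref{eq:case2} by $d_2(a,b)=\sqrt{\hat{x}_b^2+\hat{y}_b^2-2\hat{x}_b\hat{y}_b\cos(\pi-\theta)}$. As numerator and denominator are both homogeneous of degree one in $(\hat{x}_b,\hat{y}_b)$, each bound depends on $b$ only through $t:=\hat{y}_b/\hat{x}_b$, yielding exactly the functions $f_{2,3}(t)$ of \eqref{eq:parat_max} and $f_{1,4}(t)$ of \eqref{eq:paraw_max}. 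The spanning ratio is then at most the larger of $\sup_t f_{2,3}(t)$ and $\sup_t f_{1,4}(t)$.

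Next I would carry out the one-variable optimization. Setting each derivative to zero gives the critical value of $t$ recorded just before the statement, and substituting it back produces the closed forms \eqref{max.span} and \eqref{candidate.max}. To pin down the suprema I would compare, for each function, the three candidate values: the endpoint $f(0)$, the limit as $t\to\infty$, and the interior critical value $f^*$. This reduces the claim to the inequalities $f_{2,3}(0)\ge f_{1,4}(0)$, $f_{2,3}^*\ge A$ and $f_{2,3}^*\ge f_{1,4}^*$, which together force the global maximum to equal $f_{2,3}^*$, i.e.\ the expression in \eqref{max.span}.

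Finally I would specialize the angle. In Scenarios 2 and 3 the interior angle $\theta$ equals $\pi-\theta_0$ and $\theta_0$ respectively; since $\theta_0$ is non-obtuse we have $|\cos\theta|=\cos\theta_0$ and $\sin\theta=\sin\theta_0$ in both, while $\cos\theta=-\cos\theta_0$ in Scenario 2 and $\cos\theta=+\cos\theta_0$ in Scenario 3. Comparing the value of \eqref{max.span} in the two cases, Scenario 2 replaces $A-\cos\theta_0$ by the larger $A+\cos\theta_0$ and adds the nonnegative term $2A\cos\theta_0$ under the inner radical, so $\theta=\pi-\theta_0$ is the worst case. Substituting $|\cos\theta|=\cos\theta_0$, $\cos\theta=-\cos\theta_0$ and $\sin\theta=\sin\theta_0$ into \eqref{max.span} collapses it to \eqref{final.formula}, which proves the theorem.

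I expect the main obstacle to be the one-variable optimization together with the cross-scenario comparison: confirming that each vanishing derivative gives a maximum rather than a minimum or saddle, and verifying the radical inequalities $f_{2,3}^*\ge A$ and $f_{2,3}^*\ge f_{1,4}^*$. Both quantities carry nested square roots in $A$ and $\cos\theta$, so direct manipulation is unwieldy; the cleaner route will be to clear denominators and square in stages, checking at each step that the terms being squared are nonnegative (using $A\ge 1$ and $0\le\cos\theta_0<1$), rather than wrestling with the nested radicals directly.
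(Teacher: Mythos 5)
Your proposal follows essentially the same route as the paper: divide the two bounds of Theorem~\ref{theo:theo} by $d_2(a,b)$ to obtain the one-variable functions $f_{2,3}$ and $f_{1,4}$ of \eqref{eq:parat_max} and \eqref{eq:paraw_max}, compare the candidate maxima $f(0)$, $\lim_{t\to\infty}f(t)$ and the critical value $f^*$ via the inequalities $f_{2,3}(0)\ge f_{1,4}(0)$, $f_{2,3}^*\ge A$ and $f_{2,3}^*\ge f_{1,4}^*$, and then observe that $\theta=\pi-\theta_0$ (Scenario 2) maximizes \eqref{max.span}, yielding \eqref{final.formula}. Your closing remarks correctly identify the radical comparisons that the paper leaves as ``we can show,'' and your proposed squaring-in-stages strategy is a reasonable way to discharge them.
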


\subsection{The Lower Bound}

\begin{theorem}
    There are parallelogram Delaunay graphs that have spanning ratio arbitrarily close to
    $$\frac{\sqrt{2}\sqrt{1+A^2+2A\cos(\theta_0)+(A+\cos(\theta_0))\sqrt{1+A^2+2A\cos(\theta_0)}}}{\sin(\theta_0)}$$
\end{theorem}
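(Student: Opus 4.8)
The plan is to exhibit, for every $\varepsilon>0$, a point set whose parallelogram Delaunay graph contains a pair $a,b$ with $d_2^T(a,b)/d_2(a,b)\ge h(A,\theta_0)-\varepsilon$. The whole design is dictated by the equality conditions of the upper bound proof, so my first step is to isolate the geometry that makes every inequality in Theorem~\ref{theo:theo} tight at once. The maximiser in \eqref{max.span} occurs at $\theta=\pi-\theta_0$ (Scenario 2), but the edge-length estimate \eqref{eq:lem8_2a2}/\eqref{eq:lem8_2b2} is tight only when the connecting edge $(h_j,l_j)$ is the \emph{long} diagonal of its empty parallelogram. I would therefore build the construction around the case (2)(b) configuration of Lemma~\ref{lem:crossing}: a low point $l_j$ serving as inductive point, reached from a high point $h_j$ lying to its left by an edge of length exactly $\sqrt{1+A^2+2A\cos\theta_0}\,(\hat x_{l_j}-\hat x_{h_j})$. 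I would also orient $ab$ along the critical slope $\hat y_b/\hat x_b$ at which the derivative of $f_{2,3}$ vanishes, so that the ratio in \eqref{eq:parat_max} equals \eqref{max.span}.

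Next I would give a single self-similar gadget and nest it. Working in the $\{\hat x,\hat y\}$ basis, I place $a$ at the origin and $b$ at the critical slope, together with a constant number of points forming one empty parallelogram $\hat P_1$ whose boundary forces: a maximal high staircase $a=h_0,h_1,\dots,h_j$ of NE/NW edges whose total length meets the $L_1$ bound of Lemma~\ref{lem:lem7} with equality, followed by the long-diagonal edge $(h_j,l_j)$ to a low inductive point $c=l_j$ lying on segment $ab$. By the emptiness of $\hat P_1$ and the general position assumption, the only $a$--$b$ paths in the induced graph are the forced ones, so the shortest path must take this detour. Crucially, $c$ and $b$ span a scaled translate of $\hat P$ strictly smaller than that of $a,b$, and the sub-instance from $c$ to $b$ is geometrically similar to the original with a fixed contraction factor. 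I would then define the depth-$n$ point set by recursively substituting a copy of the gadget, scaled and translated to $c$, in place of the straight $c$--$b$ segment.

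With this recursion, if $R_n$ denotes the ratio $d_2^T(a,b)/d_2(a,b)$ realised at depth $n$, then normalising $d_2(a,b)=1$ gives a recurrence $R_n=\sigma+\rho\,R_{n-1}$, where $\sigma$ is the normalised length of the high staircase together with the long-diagonal edge and $\rho=d_2(c,b)<1$ is the contraction factor of the nested sub-instance; hence $R_n\to\sigma/(1-\rho)$. I would then verify by direct computation that $\sigma/(1-\rho)$ equals the fixed point \eqref{max.span} evaluated at $\theta=\pi-\theta_0$, i.e.\ \eqref{final.formula}. As consistency checks, specialising to $\theta_0=\pi/2$ must reproduce van Renssen et al.'s rectangle construction, and further setting $A=1$ must reproduce Bonichon et al.'s square construction.

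The main obstacle is Delaunay feasibility across all scales. I must certify that (i) every intended edge is realised by some empty scaled translate of $P$, and, more delicately, (ii) no \emph{shortcut} edge is ever present --- in particular no edge from a vertex of one gadget directly to a vertex deep inside a nested copy, and no edge from $a$ toward $b$ bypassing $c$ --- since any such chord would short-circuit the forced detour and lower the ratio. Establishing (ii) amounts to showing that for each candidate chord, every scaled translate of $P$ through its endpoints contains another vertex in its interior; this forces me to choose the inter-level gaps and the general-position perturbations small enough that emptiness of the nested parallelograms is preserved, yet large enough that the points remain in convex position along the critical direction. Carrying this out uniformly over the unbounded recursion, rather than for a single level, is where the real work lies.
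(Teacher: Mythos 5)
You have correctly isolated the extremal geometry --- Scenario 2, the critical slope $\hat{y}_b/\hat{x}_b$ at which $f_{2,3}$ is maximized, a nearly-vertical high path whose length approaches the $L_1$ bound of Lemma~\ref{lem:lem7}, followed by a single edge of slope $-A$ (the anti-diagonal of $\hat{P}$) of length $\sqrt{1+A^2+2A\cos\theta_0}$ times its $\hat{x}$-extent --- and this is exactly the geometry the paper realizes. But there is a genuine gap: you never write down a point set. The entire content of a lower-bound theorem is the explicit construction together with the verification that the resulting Delaunay graph contains the intended edges and no shortcuts, and you explicitly defer both ("where the real work lies"). In particular you give no mechanism for blocking the direct edge $(a,b)$ or edges from $a$ into the far column; in the paper this is done by a \emph{second} column of $n/2$ perturbed, nearly-collinear points descending from $b$ to $\alpha\hat{x}-\alpha A\hat{y}$ (Figure~\ref{fig:lowerbound}), which sits on the E boundary line of the critical empty parallelograms (hence does not destroy the intended edges) while invading the interior of any parallelogram that would certify a shortcut. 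Your "constant number of points forming one empty parallelogram" cannot simultaneously meet the $L_1$ bound (which requires the staircase edges to be nearly axis-parallel, hence a perturbation argument) and block all chords; the paper needs the full columns for the latter.

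The recursive scaffolding is also self-undermining rather than merely unnecessary. Your fixed-point equation $R_n=\sigma+\rho R_{n-1}$ converges to the target $h(A,\theta_0)$ only if $\sigma=h(A,\theta_0)(1-\rho)$, i.e., only if a \emph{single} gadget level already achieves ratio $h(A,\theta_0)$ on the subsegment $[a,c]$ (note $c$ lies on $ab$, so $[a,c]$ has the same critical slope). If you can build such a level, you should simply take $c=b$ and stop --- which is precisely the paper's single-level construction: one column of equidistant points ascending from $a$ to height $\beta+\alpha A$, one descending from $b$, arbitrarily small horizontal perturbations, and a direct computation showing the forced path via $p_{n/2}$ has length $\alpha\bigl(A+\sqrt{1+A^2+2A\cos\theta_0}\bigr)+\beta$, whose ratio to $d_2(a,b)$ is $f_{2,3}(\beta/\alpha)$. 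If you cannot build such a level, the recursion cannot rescue you. Meanwhile nesting infinitely many scales multiplies the feasibility obligations you already flag as unresolved: each level's certifying parallelograms must avoid all points of all deeper levels clustered near its SE corner. Nothing in your proposal is wrong in the sense of proving a false statement, but as written it is a plan whose essential step --- the construction and its Delaunay verification --- is missing.
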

We now construct a set of points whose parallelogram Delaunay graph achieves the lower bound. We are in Scenario 2 with $A\geq 1$, we have $L=A$ and $\pi-\theta=\theta_0\leq \frac{\pi}{2}$.  We construct this point set as follows, let $a$ be the origin and let $b=\alpha\hat{x}+\beta\hat{y}$. We make two vertical columns of equidistant vertices $p_1,..,p_{n/2}$ and $q_1,.., q_{n/2}$ where $p_1=a$ and $p_{n/2}=(\beta+\alpha  A)\hat{y} $ and $q_1=b$ and $q_{n/2}=\alpha\hat{x}-\alpha  A\hat{y}$, refer to Figure~\ref{fig:lowerbound}$(a)$. Next, we move the vertices an arbitrary small distance in the horizontal direction, such that $p_i$ lies to the left of $p_{i+1}$ and $q_i$ lies to the left of $q_{i+1}$ for $1\leq i \leq n/2$, refer to Figure~\ref{fig:lowerbound}$(b)$. Moreover, we also show two triangles and their corresponding parallelograms. \\ 
\begin{figure}
   \captionsetup{justification=centering}
   \centerline{
    \includegraphics[scale=0.822]{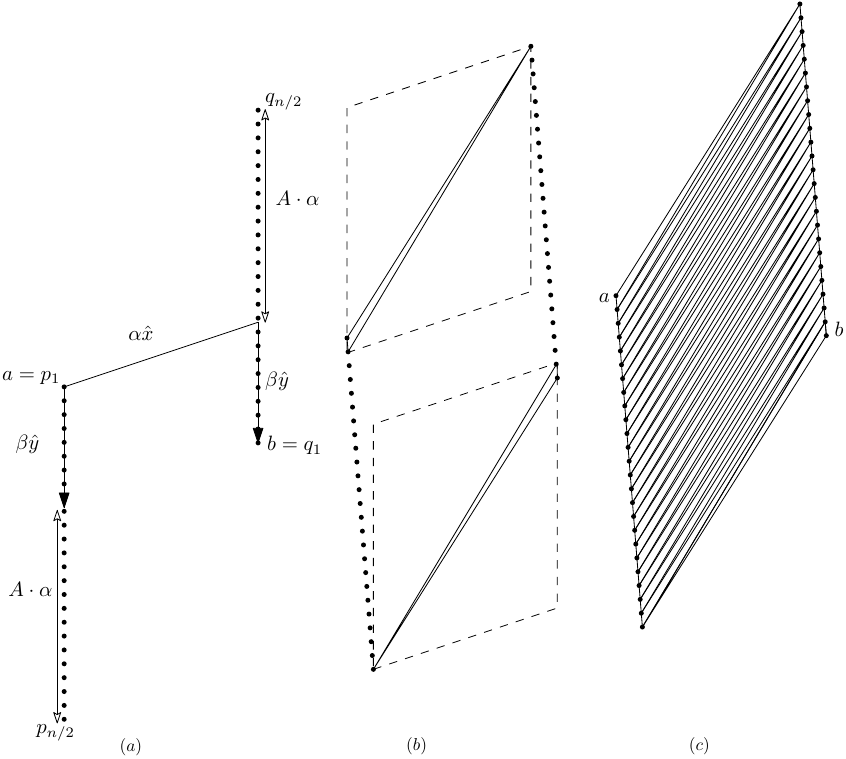}}
    \caption{Lower bound construction.}
    \label{fig:lowerbound}
\end{figure}
We proceed to analyze the length of one of the shortest paths between $a$ and $b$. Specifically the one via $p_{n/2}$, refer to Figure~\ref{fig:lowerbound}(c). Since all perturbations can be made arbitrarily small, this path has length
\begin{align*}
(\beta+\alpha A)+\sqrt{\alpha^2+(-\alpha A)^2-2\alpha(-\alpha A)\cos(\pi-\theta)}  = \alpha(A+\sqrt{1+A^2+2A\cos(\theta_0)})+\beta.
\end{align*}
The Euclidean distance between $a$ and $b$ is arbitrarily close to $$\sqrt{\alpha^2+\beta^2-2\alpha\beta\cos(\theta_0)}.$$ This implies that the spanning ratio is arbitrarily close to 
 \begin{align*}
\frac{\alpha(A+\sqrt{1+A^2+2A|\cos(\theta_0)|})+\beta}{\sqrt{\alpha^2+\beta^2-2\alpha\beta\cos(\theta_0)}} ,
 \end{align*}
which matches $f_{2,3}(\beta/\alpha)$ defined at the end of the previous section.

\section{Conclusion}
This paper generalized the approach by
Bonichon et al.~\cite{DBLP:journals/comgeo/BonichonGHP15}, and van Renssen et al.~\cite{DBLP:conf/esa/RenssenSSW23} to find the exact spanning ratio of the parallelogram Delaunay graph. This makes the parallelogram the fifth convex shape for which a tight spanning ratio is known. Note that the work of van Renssen et al. and of Bonichon et al. are special cases of this result. Indeed,
by setting $\theta_0=\pi/2$ in~\eqref{final.formula}, we obtain the exact spanning ratio for the rectangle Delaunay graph~\cite{DBLP:conf/esa/RenssenSSW23}.
If we also set $A = 1$,
we obtain the exact spanning ratio for the square Delaunay graph~\cite{DBLP:journals/comgeo/BonichonGHP15}.
Obviously, our bound also applies to diamonds.

Aside from the result itself, a significant technical contribution was the approach to generalizing the proof, which involves working with non-orthonormal bases. We believe this technique could be used to further broaden the class of convex shapes for which tight spanning ratios are known. For example, it may be feasible to apply this technique to the proof of Chew~\cite{Chew86} for the TD Delaunay graph. This would lead to an exact spanning ratio for the triangle Delaunay graph, where the triangle can be arbitrary.

\bibliography{ref}
\end{document}